\documentclass{article}


 \usepackage{PRIMEarxiv}

\usepackage[utf8]{inputenc} 
\usepackage[T1]{fontenc}    
\usepackage{hyperref}       
\usepackage{url}            
\usepackage{booktabs}       
\usepackage{amsfonts}       
\usepackage{nicefrac}       
\usepackage{microtype}      
\usepackage[table]{xcolor}         
\usepackage{subcaption}
\usepackage{diagbox}

\usepackage{wrapfig}
\usepackage{float}
\usepackage{bbm}
\usepackage{tabularx}
\usepackage{lineno}
\usepackage{changepage}
\usepackage{adjustbox}
\usepackage{cancel}
\usepackage{amsmath,amsthm,amssymb,amsfonts, fancyhdr, color, comment, graphicx, environ}
\usepackage{mathtools}
\usepackage{amsmath}
\usepackage{algorithm}
\usepackage{algorithmic}

\DeclareMathOperator*{\argmax}{arg\,max}
\DeclareMathOperator*{\argmin}{arg\,min}

\DeclarePairedDelimiter\ceil{\lceil}{\rceil}

\usepackage[capitalize,noabbrev]{cleveref}

\theoremstyle{plain}
\newtheorem{theorem}{Theorem}[section]

\theoremstyle{definition}

\theoremstyle{remark}

\usepackage[textsize=tiny]{todonotes}


\title{
Conformal Prediction for Ensembles: \\
Improving Efficiency via Score-Based Aggregation
}

%

\author{
  Eduardo Ochoa Rivera$^*$ \\
  Department of Statistics \\
  University of Michigan \\
  Ann Arbor, MI 48104 \\
  \texttt{ochoa@umich.edu}
  \And
  Yash Patel\thanks{Denotes alphabetic ordering indicating equal contributions.} \\
  Department of Statistics \\
  University of Michigan \\
  Ann Arbor, MI 48104 \\
  \texttt{yppatel@umich.edu}
  \And
  Ambuj Tewari \\
  Department of Statistics \\
  University of Michigan \\
  Ann Arbor, MI 48104 \\
  \texttt{tewaria@umich.edu}
}

\begin{document}

\maketitle

\begin{abstract}
  Distribution-free uncertainty estimation for ensemble methods is increasingly desirable due to the widening deployment of multi-modal black-box predictive models. Conformal prediction is one approach that avoids such distributional assumptions. Methods for conformal aggregation have in turn been proposed for ensembled prediction, where the prediction regions of individual models are merged as to retain coverage guarantees while minimizing conservatism. Merging the prediction {\em regions} directly, however, can miss out on opportunities to further reduce conservatism by exploiting structures present in the conformal {\em scores}. We, therefore, propose a novel framework that extends the standard scalar formulation of a score function to a multivariate score that produces more efficient prediction regions. We then demonstrate that such a framework can be efficiently leveraged in both classification and predict-then-optimize regression settings downstream and empirically show the advantage over alternate conformal aggregation methods.
\end{abstract}

\section{Introduction}\label{section:intro}
Ensemble methods are an oft-used class of statistical modeling techniques due to their ability to reduce variance or improve predictive accuracy \cite{schapire1999brief,zhang2012ensemble,dietterich2000ensemble}. 
Such methods are increasingly being coupled with complex, black-box models, such as in multi-modal language models  \cite{zhang2023adding,radford2021learning,sun2013survey,zhao2017multi,yan2021deep}. Couplings of this sort are seeing ever-widening deployment in safety-critical settings, such as medicine \cite{yuan2018multi,li2018review,yuan2017multi} and robotics \cite{brena2020choosing,blasch2021machine,alatise2020review}.  


Increasing interest is, therefore, now being placed on quantifying uncertainty for such models \cite{subedar2019uncertainty,tian2020uno,denker1990transforming,havasi2020training,malinin2018predictive}. Towards this end, methods of uncertainty quantification have arisen, such as deep ensembles and committee estimation \cite{rahaman2021uncertainty,abdar2021review,carrete2023deep}. Such methods, however, sacrifice generality with the imposition of distributional assumptions, motivating the need for distribution-free uncertainty quantification for ensemble methods.

One method for performing distribution-free uncertainty quantification is conformal prediction, which provides a principled framework for producing distribution-free prediction regions with marginal frequentist coverage guarantees \cite{angelopoulos2021gentle, shafer2008tutorial}. By using conformal prediction on a user-defined score function, prediction regions attain marginal coverage guarantees.
While calibration is guaranteed from this procedure, predictive efficiency, i.e. the size of the resulting prediction regions, can be unboundedly large for poorly chosen score functions.

As a result, methods have arisen to perform conformal model aggregation, which both provide uncertainty estimates of the ensembled predictions and do so in ways as to minimize the prediction region size \cite{gasparin2024conformal,trunov2023online,yang2024selection,v2023online,gasparin2024merging}. While such approaches succeed in reducing the prediction region size over naive aggregation, they all aggregate the \textit{separately conformalized} prediction regions of the predictors in the ensemble. In doing so, they forgo the possibility of automatically leveraging shared structure amongst the scores of the individual predictors, resulting in overly conservative prediction regions.

\begin{figure*}
  \centering 
  \includegraphics[width=0.8\textwidth]{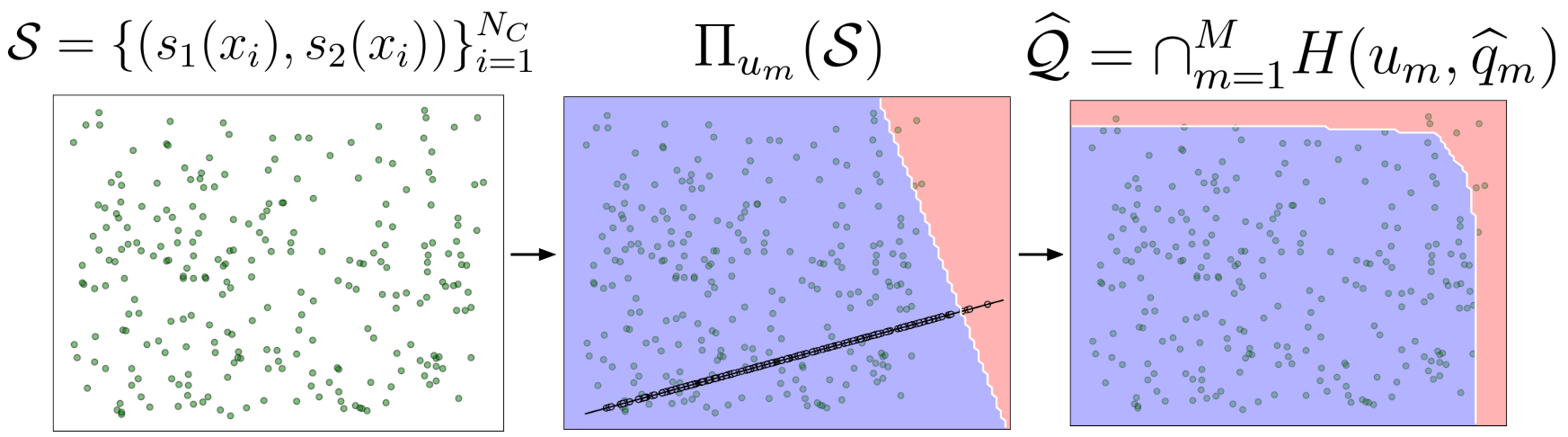}
  \caption{\label{fig:workflow} CSA provides a principled extension to the standard conformal prediction pipeline by leveraging ideas from higher-dimensional quantile regression to define quantile envelopes $\widehat{\mathcal{Q}}$ instead of scalar quantiles $\widehat{q}$. It does so by evaluating a collection of score functions (here $s_1$ and $s_2$) over the calibration dataset to define $\mathcal{S}$, finding quantiles $\{\widehat{q}_{m}\}$ over a set of projection directions $\{u_{m}\}$, and taking $\widehat{\mathcal{Q}}$ to be the intersection of the resulting half-planes $H(u_{m},\widehat{q}_{m})$. 
  These quantile envelopes result in more informative prediction regions that can be used in downstream tasks. 
  }
\end{figure*}

We instead propose to perform aggregation in \textit{score space} by extending traditional conformal prediction to consider a multivariate score function and defining prediction regions using ``quantile envelopes'' in place of scalar quantiles. Doing so enables efficient, data-driven, automated conformal model aggregation. We demonstrate that this formulation retains the desired distribution-free coverage guarantees typical of standard conformal prediction and that the resulting prediction regions can be used efficiently in both classification and regression settings. Our contributions, thus, are: 

\begin{itemize}
    \item Providing a multivariate extension to conformal prediction, dubbed ``conformal score aggregation'' (CSA), that leverages quantile envelopes to enable data-driven, informative uncertainty estimation for model ensembles while retaining coverage guarantees.
    \item Demonstrating how the prediction regions resulting from CSA can be efficiently leveraged in downstream predict-then-optimize regression tasks.
    \item Demonstrating the empirical improvement of the CSA framework over alternate conformal aggregation strategies across classification and regression settings.
\end{itemize}

\section{Background}\label{section:background}
\subsection{Conformal Prediction}
Coverage guarantees of uncertainty quantification methods generally rely on distributional assumptions, often via asymptotics or explicit specification. To alleviate the need for such restrictive assumptions, interest in finite-sample, distribution-free uncertainty quantification methods has risen. Conformal prediction is one such method \cite{angelopoulos2021gentle, shafer2008tutorial}. 

Conformal prediction serves as a wrapper around such predictors, producing prediction regions $\mathcal{C}(x)$ that have formal guarantees of the form $\mathcal{P}_{X,Y}(Y \notin \mathcal{C}(X))\leq \alpha$ for some prespecified level $\alpha$. To achieve this, ``split conformal'' partitions the dataset  $\mathcal{D} = \{(x_i,y_i)\}_{i=1}^{N}$ into a training set $\mathcal{D}_T$ and a calibration set $\mathcal{D}_C$. The former serves as the data used to fit $\widehat{f}$. 
Users of conformal prediction must then design a ``score function'' $s(x,y)$, which should quantify ``test error'', often in a domain-specific manner. For instance, a simple score function for a regression setting would be $s(x,y)=\Vert\hat f(x)-y\Vert$. This score function is then evaluated across the calibration set to define $\mathcal{S}_{C} = \{s(x, y)\mid (x, y)\in\mathcal{D}_C\}$. For a desired coverage of $1-\alpha$, we then take $\widehat{q}$ to be the $\ceil{(|\mathcal{D}_{C}|+1)(1-\alpha)}/|\mathcal{D}_{C}|$ quantile of $\mathcal{S}_{C}$, with which prediction regions for future test queries $x$ can be defined as $\mathcal{C}(x) = \{y \mid s(x, y) \le \widehat{q}\}$. Under the exchangeability of the score of a test point $s(X', Y')$ with $\mathcal{S}_{C}$, we have the desired \textit{finite-sample} probabilistic guarantee that $1 - \alpha\le\mathcal{P}_{X',Y'}(Y' \in \mathcal{C}(X'))$.

While this guarantee holds for any $s(x,y)$, the informativeness of the resulting prediction regions, quantified as the inverse expected Lebesgue measure across $X$, i.e. $\left(\mathbb{E}[\mathcal{L}(\mathcal{C}(X))]\right)^{-1}$, is intimately tied to its specification \cite{shafer2008tutorial}. Thus, much of the challenge of conformal prediction relates to choosing a score function that retains coverage while minimizing region size. 

\subsection{Quantile Envelopes}\label{section:quantile_envelopes}
Generalizations of quantiles have a long history in statistics \cite{rousseeuw1998computing,serfling2002quantile}. 
Unlike univariate data, multivariate data do not lend itself to an unambiguous definition of a quantile, as there is no canonical ordering in higher dimensional spaces. 
The notion of a ``directional quantile'' for a random variable $X \in\mathbb{R}^{n}$ can, however, be directly defined given some direction $u\in\mathcal{S}^{n-1}$, namely as $Q(X, \alpha, u) = \inf \{q \in \mathbb{R}: \mathcal{P}(u^{\top}X \leq q) \geq \alpha\}$ \cite{kong2012quantile,paindaveine2011directional,hallin2010multivariate}. When there is no ambiguity, we just denote it as $Q(\alpha, u)$. 
For any given $u$, notice the choice of quantile defines a corresponding halfplane $H(u, Q(\alpha, u)) =\left\{x \in \mathcal{X}: u^{\top} x \leq Q(\alpha, u)\right\}$. 
The quantile envelope is then the intersection thereof:
\begin{equation}\label{eqn:quantile_envelope}
    D(\alpha) = \bigcap_{u\in\mathcal{S}^{n-1}} H(u, Q(\alpha, u)).
\end{equation} 
Notably, while each individual $H(u, Q(\alpha, u))$ captures $1-\alpha$ of the points, $D(\alpha)$ does \textit{not}, as it is the intersection thereof and hence captures $<1-\alpha$ of the mass. If $1-\alpha$ combined coverage is sought, a correction, such as Bonferroni adjustment, is used for the individual planes.
\subsection{Predict-Then-Optimize}\label{section:bg_pred_opt}
In the case of classification, conformal prediction regions simply constitute a subset of the label space, making their direct use by end users straightforward \cite{cresswell2024conformal}. In high-dimensional regression settings, however, prediction regions become harder to use directly; for this reason, recent works have started shifting focus to using them in their implicit forms.

One such application is \cite{patel2023conformal}, where conformal prediction was leveraged in a predict-then-optimize setting. As the name suggests, predict-then-optimize problems are two-stage problems, which take observed contextual information $x$ and predict the parametric specification of a downstream problem of interest $\widehat{c} := g(x)$ with some trained predictor $g$. The final result is then a decision made with this specification, $w^{*} := \min_{w} f(w, \widehat{c})$. An example of such a setting is if an optimal labor allocation $w^{*}$ is sought based on predicted demand $\widehat{c}$ from transactions $x$ in a delivery platform.

While the predicted $\widehat{c}$ is often trusted, this approach is inappropriate in risk-sensitive settings, where misspecification of the map $g : \mathcal{X}\rightarrow\mathcal{C}$ could lead to suboptimal decision-making. For this reason, recent interest has been placed on studying a ``robust'' formulation \cite{ohmori2021predictive, chenreddy2022data, sun2023predict}. Following this line of work, \cite{patel2023conformal} proposed studying $w^{*}(x) := \min_{w} \max_{\widehat{c}\in\mathcal{C}(x)} f(w, \widehat{c})$, with $\mathcal{C}(x)$ being produced by conformalizing the predictor $g$. 

\subsection{Related Works}\label{section:related_works}
Ensemble methods consist of $K$ predictors $f_{k}:\mathcal{X}_{k}\rightarrow\mathcal{Y}$;
notably, such predictors need not map from the same set of covariates. A naive approach for uncertainty quantification would then be to conformalize the ensembled predictor. That is, for an ensembling algorithm $\mathcal{F} : \mathcal{Y}^{K}\rightarrow\mathcal{Y}$, a score function
    $s(\mathcal{F}(f_{1}(x),...,f_{K}(x)), y)$
would be defined. 
Denoting the $\ceil{(N_{\mathcal{C}}+1)(1-\alpha)}/N_{\mathcal{C}}$ quantile of the score distribution over $\mathcal{D}_{C}$ as $\widehat{q}(\alpha)$, $\mathcal{C}(x)=\{y : s(x, y) \leq \widehat{q}(\alpha)\}$ would then be calibrated. 

Such an approach, however, lacks some desirable properties. In particular, prediction regions $\mathcal{C}(x)$ should have the quality that, if a particular predictor has less uncertainty in its predictions, as is frequently true of ensemble settings where the predictors span multiple input data modalities, upon routing to that predictor, the corresponding size of the prediction region should be smaller than if it had been routed to a different predictor. While the naive approach does, in principle, support this property, it ultimately relies on defining an \textit{uncertainty-aware} ensembling algorithm $\mathcal{F}$. In its typical form, however, $\mathcal{F}$ simply takes \textit{point predictions} $f_{1}(x),...,f_{K}(x)$ in as input, meaning any uncertainty-awareness would need to be baked in a priori into the definition of $\mathcal{F}$ through domain knowledge of the uncertainties of the predictors $f_{1},...,f_{K}$, which can seldom be specified precisely, sacrificing the predictive efficiency of $\mathcal{C}(x)$.

Conformal model aggregation, thus, seeks to mitigate these deficiencies by aggregating the prediction regions $\mathcal{C}_{1}(x),...,\mathcal{C}_{K}(x)$ rather than the individual point predictions \cite{gasparin2024conformal,yang2024selection,v2023online,gasparin2024merging}. While there are several methods in this vein, they can be categorized into one of two general approaches. 
The first line of work seeks to perform model \textit{selection}, in which a single conformal predictor is selected $\mathcal{C}_{k^{*}}$, typically based on the criterion of minimizing region size $k^{*} := \argmin_{k} \mathbb{E}[\mathcal{L}(\mathcal{C}_{k}(X))]$ \cite{yang2024selection,v2023online}.

Generally, however, methods leveraging the full collection of predictors produce less conservative regions \cite{gasparin2024conformal,gasparin2024merging}. Such works aggregate the individual prediction regions into a final region by defining $\mathcal{C}(x) := \{ y\mid \sum_{k=1}^{K} w_{k} \mathbbm{1}[y\in\mathcal{C}_k(x)] \ge\widehat{a} \}$ for weights $\{w_{k}\}\in[0,1]$ such that $\sum_{k=1}^{K} w_{k} = 1$ and a threshold $\widehat{a}$. Methods then differ in the procedure by which $\{w_{k}\}$ and $\widehat{a}$ are prescribed, several of which were prescribed by \cite{gasparin2024merging}, whose detailed presentation is deferred to \Cref{section:comparisons_methods} for space reasons. We note that the methods of \cite{gasparin2024conformal} are designed for a different setting than that considered herein, namely that in which conformal coverage is sought adaptively over data streams.

In this vein, \cite{luo2024weighted} have recently proposed a vector-score extension as that discussed herein, in which candidate weight vectors $\{w_{m}\}\in\mathbb{R}^{K}$ are searched over for score aggregation. That is, a vector $s(x) := (s_{1}(x,y),...,s_{K}(x,y))\in\mathbb{R}^{K}$ of scores $s_{k}(x,y)$ corresponding to each predictor $f_{k}(x)$ is predicted and its aggregate prediction region defined on the projection $\langle w_{m^*}, s\rangle$ for $w_{m^*}$ the weight resulting in the smallest prediction region. This method, however, has two shortcomings addressed herein. The first is that their method can only be applied in classification settings, whereas our method can be leveraged across both regression and classification problems. The second is that their approach only uses a \textit{single} weighted projection in the end, resulting in suboptimal aggregation and, therefore, conservative prediction regions.

\section{Method}\label{section:method}

\subsection{Multivariate Score Quantile}\label{section:csa_envelope}





We consider the setting typical of conformal model aggregation, as discussed in \Cref{section:related_works}, in which predictors $f_{1}(x),...,f_{K}(x)$ and corresponding scores $s_{1}(x,y),...,s_{K}(x,y)$ are defined. We assume a similar premise as \cite{luo2024weighted}, in which
the scores are stacked into a multivariate score $s(x,y) := (s_{1}(x,y),...,s_{K}(x,y))$. A naive approach would then leverage standard conformal prediction over a pre-defined map $g : \mathbb{R}^{K}\rightarrow\mathbb{R}$, e.g., $g(s) = \sum_{k=1}^{K} s_k$. 
Similar to the naive conformalization of an ensembled predictor discussed in \Cref{section:related_works}, using a \textit{fixed} $g$ fails to adapt to any disparities in uncertainties present across predictors or requires intimate knowledge of such uncertainties.
We instead wish to provide a data-adaptive pipeline to automatically produce such a $g$. 

Importantly, we heretofore assume the score functions are non-negative, i.e., $s_{k} : \mathcal{X}\times\mathcal{Y}\rightarrow\mathbb{R}_{+}$, which is typically the case as the score serves as a generalization of the residual. We highlight that many of the details of the method presented below are geometric in nature and are more easily understood with the supplement of diagrams. We have, thus, provided an accompanying visual walkthrough of the procedure in \Cref{section:csa_viz} to clarify its presentation. 

\subsubsection{Score Partial Ordering}\label{section:csa_ordering}
Intuitively, our method seeks to directly generalize the approach of split conformal, by ``ordering'' the collection of multivariate calibration scores and taking the $1-\alpha$ score under such an ordering to be a threshold $\widehat{\mathcal{Q}}$ with which prediction regions are then implicitly defined. 
Formally, the multivariate ``ordering'' is established as a 
pre-ordering $\lesssim$ over $\mathbb{R}^{K}$; a pre-ordering differs from a total ordering in that it need not satisfy the antisymmetric axiom of a total ordering. Roughly speaking, an ``acceptance region,'' so called as it serves as the criterion used to ultimately decide which $y$ are accepted into the prediction region, is then defined as $\widehat{\mathcal{Q}} := \{s\mid s\lesssim\widehat{q}\}$, where $\widehat{q}$ is the $1-\alpha$ empirical quantile of $\mathcal{S}_C$ under $\lesssim$. Such a $\widehat{\mathcal{Q}}$ naturally generalizes the standard scalar acceptance interval of $[0,\widehat{q}]$ in the case of non-negative score functions. We briefly highlight the distinction between \textit{acceptance regions} and \textit{prediction regions}. The former are subsets of the space $\subset\mathbb{R}^{K}$ of multivariate scores that ultimately define the criteria for retaining particular $y$ values in the prediction region. The latter are the subsets of the output space $\mathcal{Y}$ and it is these that ultimately have coverage guarantees. The two, however, are directly related; in particular, for a fixed score $s(x,y)$, a larger acceptance region will result in a more conservative prediction region.

Crucially, therefore, the problem of choosing this pre-ordering closely parallels that of choosing $g$, where a poorly chosen pre-ordering will result in overly large acceptance regions and, hence, conservative prediction regions. For instance, using a lexicographical ordering $\lesssim_{\mathrm{Lex}}$ will result in axis-aligned hyper-rectangular acceptance regions.
As a result, rather than manually prescribing a pre-ordering, we define $\lesssim$ in a data-driven fashion by prescribing an indexed family of nested sets $\{\mathcal{A}_{t}\}_{t\in\mathbb{R}}$, such that $\mathcal{A}_{t_{1}}\subset\mathcal{A}_{t_{2}}$ for $t_{1}\le t_{2}$ and stating $s_1 \lesssim s_2$ if $\forall t$, $s_2\in\mathcal{A}_{t}\implies s_1\in\mathcal{A}_{t}$. 


For a family of sets $\{\mathcal{A}_{t}\}_{t\in\mathbb{R}}$, we take each $\mathcal{A}_{t}$ to be the region of the positive orthant $\mathbb{R}_{+}^{K}$ bounded by the coordinate axes and an ``outer frontier'' parameterized by $t$. 
The shape of this outer frontier remains fixed over the family and is merely scaled outward from the origin with $t$. Under this choice, comparing $s_1,s_2\in\mathbb{R}^{K}$, i.e. checking if $s_1\lesssim s_2$, amounts to checking if $t(s_1)\le t(s_2)$, where $t(s)$ is the smallest $t$ for which the outer frontier of $\mathcal{A}_{t}$ intersects $s$. Notably, $t(s)$ is precisely the aforementioned data-driven score fusion function $g(s)$ of interest. Defining a data-adaptive $g(s)$, therefore, reduces to having a data-driven approach for defining the outer frontier of $\mathcal{A}_{t}$. We restrict this outer frontier to be such that $\mathcal{A}_{t}$ is a convex set; if $\mathcal{A}_{t}$ were permitted to be nonconvex, computing $t(s) := \min \{t\in\mathbb{R}: s\in\mathcal{A}_{t}\}$ would potentially be computationally expensive. The benefits of such convexity are highlighted, for example, in \Cref{section:csa_pto}.



To have tight acceptance regions, we formally wish for the pre-ordering to have the property that the acceptance region given by $\widetilde{\mathcal{Q}}$ has minimal Lebesgue measure and captures $1-\alpha$ points of $\mathcal{S}_C$. The problem of discovering an optimal pre-ordering can, thus, be equivalently stated as seeking to define the outer frontier of $\mathcal{A}_t$ to match that of the tightest $1-\alpha$ convex cover of $\mathcal{S}_C$. 

This motivates selecting the outer frontier to be the $1-\alpha$ quantile envelope of $\mathcal{S}_C$. Using $\mathcal{S}_C$ to define $\mathcal{A}_t$ and in turn $\lesssim$, however, sacrifices the exchangeability of its points with test scores $s'$, as the very nature of ordering would change in swapping $s'$ with any $s\in\mathcal{S}_C$. The goal follows as seeking to define the outer frontier as the $1-\alpha$ quantile envelope of $\mathcal{S}_C$ without directly using $\mathcal{S}_C$. For this reason, we partition $\mathcal{S}_C = \mathcal{S}_C^{(1)}\cup \mathcal{S}_C^{(2)}$, where we define $\lesssim$ using $\mathcal{S}_C^{(1)}$ and compute $\widehat{q}$ over $\mathcal{S}_C^{(2)}$. Such a split is predicated on the assumption that the $1-\alpha$ quantile envelope defined over $\mathcal{S}_C^{(1)}$ resembles that of $\mathcal{S}_C^{(2)}$, implying the $|\mathcal{S}_C^{(1)}|$ should be sufficiently large as to capture this structure accurately.

We now focus attention on defining the quantile envelope over $\mathcal{S}_C^{(1)}$ using a technique paralleling that described in \Cref{section:quantile_envelopes}. In particular, we start by selecting the projection directions $\{u_{m}\}$ of \Cref{eqn:quantile_envelope}; since $s\in\mathbb{R}_{+}^{K}$, we similarly restrict $u_m\in\mathcal{S}_{+}^{K-1} := \mathcal{S}^{K-1}\cap\mathbb{R}_{+}^{K}$.
To best approximate \Cref{eqn:quantile_envelope}, we wish for $\{u_{m}\}$ to be uniformly distributed over $\mathcal{S}_{+}^{K-1}$; however, exactly finding an evenly distributed set of points over hyperspheres in arbitrary $n$-dimensional spaces is a classically difficult problem \cite{schnabel2022simple} 
If $K=2$, we can solve this exactly; for $K > 2$, we generate directions stochastically such that $U\sim\mathrm{Unif}(\mathcal{S}_{+}^{K-1})$ by drawing $V_{1},...,V_{M}\sim\mathcal{N}(0,I^{K\times K})$ and defining $U_{i} := V_{i}^{|\cdot|} / \sqrt{V_{1}^{2} + ... + V_{M}^{2}}$, where $v^{|\cdot|}$ denotes the component-wise absolute values. 

We now wish to define the quantile thresholds $\{\widetilde{q}_{m}\}$ for the selected directions to optimally capture $1-\alpha$ of $\mathcal{S}_C^{(1)}$. 
Naively taking the $1-\alpha$ quantile per projection direction $u_{m}$ results in \textit{joint} coverage by $\widetilde{\mathcal{Q}} := \bigcap_{m=1}^{M} H(u_{m}, \widetilde{q}_{m})$ of $\mathcal{S}^{(1)}_{\mathcal{C}}$ to be $<1-\alpha$. A straightforward fix is to replace the $1-\alpha$ quantile per direction instead with its Bonferroni-corrected $1-\alpha/M$ quantile. While valid, this approach produces overly conservative prediction regions. We, therefore, instead tune a separate $\beta\in(\alpha/M, \alpha)$ parameter via binary search, finding the maximum $\beta^{*}$ such that using the $\beta^{*}$ quantile per direction provides the overall desired coverage, i.e. $|\bigcap_{m=1}^{M} H(u_{m}, \widetilde{q}_{m}(1-\beta^{*}))\cap \mathcal{S}^{(1)}_{\mathcal{C}}| / N_{\mathcal{C}_{1}} \in(1-\alpha,1-\alpha+\epsilon)$ for some fixed, small $\epsilon > 0$. With this choice of $\{(u_{m}, \widetilde{q}_{m})\}$, we have a defined pre-ordering, whose coverage guarantees are formally stated below and proven in \Cref{section:coverage_proof}.

\begin{theorem}\label{thm:mv_validity}
    Suppose $\mathcal{D}_{\mathcal{C}} := \{(X_i,Y_i)\}_{i=1}^{N_{\mathcal{C}}}$ and $(X',Y')$ are exchangeable. Assume further that $K$ maps $s_{k} : \mathcal{X}\times\mathcal{Y}\rightarrow\mathbb{R}$ have been defined and a composite $s(X,Y) := (s_1(X,Y),...,s_{K}(X,Y))$ is defined. Further denote by $\mathcal{S_C}$ the evaluation of $s(X,Y)$ on $\mathcal{D_{C}}$, namely $\mathcal{S_C} := \{s(X_i,Y_i)\mid(X_i,Y_i)\in\mathcal{D}_{\mathcal{C}}\}$. For some $\alpha\in(0,1)$, given a pre-order $\lesssim$ in $\mathbb{R}^K$ induced by a collection of nested sets $\{\mathcal{A}_t\}_{t \ge 0}$ define $\mathcal{Q}(\alpha) = \{s \in \mathbb{R}^K : s \lesssim s_{\ceil{(N_{\mathcal{C}}+1)(1-\alpha)}}\}$. Then, denoting $\mathcal{C}(x) := \{y : s(x, y)\in\mathcal{Q}(\alpha)\}$,
        $\mathcal{P}_{X',Y'}(Y'\in\mathcal{C}(X')) \ge 1-\alpha$.
\end{theorem}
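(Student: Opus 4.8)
The plan is to reduce the multivariate statement to the classical split-conformal argument by observing that the pre-order $\lesssim$ induced by the nested family $\{\mathcal{A}_t\}_{t\ge 0}$ collapses the multivariate scores onto a single real-valued statistic, after which exchangeability does the rest. First I would define, for each $s\in\mathbb{R}^K$, the scalar $t(s) := \inf\{t\ge 0 : s\in\mathcal{A}_t\}$; since the $\mathcal{A}_t$ are nested and indexed monotonically, $s_1\lesssim s_2$ holds if and only if $t(s_1)\le t(s_2)$, so $\lesssim$ is exactly the total pre-order pulled back from $\mathbb{R}$ along $t(\cdot)$. Next I would note that the maps $s_k$ are fixed (they are determined by $\widehat{f}$ and the pre-order is built from $\mathcal{S}_C^{(1)}$, which is disjoint from the data indexed by $i=1,\dots,N_{\mathcal{C}}$ in the theorem statement — I would make this independence/fixedness explicit), hence the random variables $T_i := t(s(X_i,Y_i))$ for $i=1,\dots,N_{\mathcal{C}}$ together with $T' := t(s(X',Y'))$ are exchangeable, being a fixed measurable function applied coordinatewise to an exchangeable sequence.

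Then the event $\{Y'\in\mathcal{C}(X')\} = \{s(X',Y')\in\mathcal{Q}(\alpha)\} = \{s(X',Y')\lesssim s_{(\lceil(N_{\mathcal{C}}+1)(1-\alpha)\rceil)}\}$ is, under the pullback, precisely $\{T'\le T_{(\lceil(N_{\mathcal{C}}+1)(1-\alpha)\rceil)}\}$, where $T_{(1)}\le\dots\le T_{(N_{\mathcal{C}})}$ are the order statistics of $T_1,\dots,T_{N_{\mathcal{C}}}$. This is the standard conformal quantile event: by exchangeability of $(T_1,\dots,T_{N_{\mathcal{C}}},T')$, the rank of $T'$ among the $N_{\mathcal{C}}+1$ values is uniform on $\{1,\dots,N_{\mathcal{C}}+1\}$ (modulo ties, which only help), so $\mathcal{P}(T'\le T_{(k)})\ge k/(N_{\mathcal{C}}+1)$ with $k=\lceil(N_{\mathcal{C}}+1)(1-\alpha)\rceil$, and $k/(N_{\mathcal{C}}+1)\ge 1-\alpha$. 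I would cite the textbook statement of this inequality (e.g.\ the split-conformal lemma referenced in \Cref{section:background}) rather than re-derive the rank argument.

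The main obstacle is not any inequality but the bookkeeping around \emph{what is random and what is conditioned on}: the pre-order $\lesssim$ (equivalently $t(\cdot)$, equivalently the directions $\{u_m\}$ and thresholds $\{\widetilde q_m\}$) is itself data-dependent, learned from $\mathcal{S}_C^{(1)}$. The argument goes through cleanly only if the theorem is read as a statement conditional on $\mathcal{S}_C^{(1)}$ (so that $\lesssim$ is a fixed, deterministic pre-order) and the exchangeable sequence $\{(X_i,Y_i)\}_{i=1}^{N_{\mathcal{C}}}\cup\{(X',Y')\}$ refers only to $\mathcal{S}_C^{(2)}$ plus the test point — this is consistent with the $N_{\mathcal{C}}$ in the theorem denoting $|\mathcal{S}_C^{(2)}|$. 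I would state this conditioning explicitly at the top of the proof, remark that the bound then also holds marginally by averaging over $\mathcal{S}_C^{(1)}$ (tower property), and flag that the choice of $\widehat q$ as the $\lceil(N_{\mathcal{C}}+1)(1-\alpha)\rceil$-th order statistic — rather than the Bonferroni- or $\beta^{*}$-tuned per-direction thresholds, which only enter the \emph{construction} of $\lesssim$ and not the final calibration step — is what makes the coverage exact regardless of how well $\mathcal{S}_C^{(1)}$ approximated the envelope of $\mathcal{S}_C^{(2)}$.
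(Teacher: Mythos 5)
Your proposal is correct and takes essentially the same route as the paper's proof: collapse the pre-order to the scalar statistic $t(s)=\inf\{t\ge 0: s\in\mathcal{A}_t\}$ and then run the standard split-conformal exchangeability/rank argument comparing $T'$ to the $\lceil(N_{\mathcal{C}}+1)(1-\alpha)\rceil$-th order statistic of the calibration values. If anything, you are slightly more careful on two points the paper glosses over — ties (the paper assumes they have probability zero, while you note the coverage inequality survives them) and the data-dependence of the pre-order (you make explicit the conditioning on $\mathcal{S}_C^{(1)}$ so that exchangeability applies only to $\mathcal{S}_C^{(2)}$ together with the test point, then marginalize) — but these are refinements of the same argument, not a different one.
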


\subsubsection{Score Quantile Threshold}\label{section:quantile_env}
To then compute $\widehat{q}$, we find $t^{*}(s)$ for each $s\in \mathcal{S}^{(2)}_{\mathcal{C}}$, defined to be $\min \{t\in\mathbb{R}: s\in\bigcap_{m=1}^{M} H(u_{m}, t\widetilde{q}_{m})\}$. This can be efficiently computed as $t^{*}(s) = \max_{m=1,...,M} (u_{m}^{\top} s / \widetilde{q}_{m})$. Denoting the $\ceil{(N_{\mathcal{C}_{2}}+1)(1-\alpha)}$-th largest $t^{*}(s)$ as $\widehat{t}$, $\widehat{q}_{m} := \widehat{t}\widetilde{q}_{m}$ and $\widehat{\mathcal{Q}} := \bigcap_{m=1}^{M} H(u_{m}, \widehat{q}_{m})$. If the tightest quantile envelope was already discovered over $\mathcal{S_C}^{(1)}$, this adjustment factor $\widehat{t}\approx 1$. Critically, such calculations can be computed efficiently in vector form. Due to space restrictions, we defer this discussion to \Cref{section:comp_efficiency}. We additionally there empirically validate the efficiency of the procedure under this vectorized implementation. We present the full algorithm in \Cref{alg:CSA}.

Importantly, while this procedure will result in convex regions $\widehat{\mathcal{Q}}$, this does \textbf{not} mean the downstream prediction regions in $\mathcal{Y}$ will be convex, as discussed in \Cref{section:csa_pto}. 
However, it is unsurprising such flexibility exists, as even a single \textit{scalar} score $s_1(x,y)$ can produce nonconvex prediction regions. One additional notable property of the CSA prediction regions is that their sizes vary across $x$ even if such variability is not baked into the constituent scores. For instance, using $s_k(x,y) := | f_k(x) - y |$ with standard, scalar conformal prediction yields intervals of length $2\widehat{q}_k$ for \textit{any} $x$, yet $|\mathcal{C}^{\mathrm{CSA}}(x)|$ even with such $\{s_k(x,y)\}$ \textit{will} vary with $x$. This variability is desirable, as predictive uncertainty is seldom uniform across the covariate space.
See \Cref{section_pred_region_intuition} for a full illustration of this. 





\begin{algorithm}
  \caption{\label{alg:CSA} CSA: $\textsc{UnifHypersphere}(K)$ is an assumed subroutine that samples $\sim\mathrm{Unif}(\mathcal{S}^{K-1})$.
  }
  \begin{algorithmic}[1]
    \STATE \textbf{Inputs: } Score functions $s_{1},...,s_{K} : \mathcal{X}\rightarrow\mathcal{Y}$, Calibration set $\mathcal{D_C}$, Desired coverage $1-\alpha$ 
    \STATE $[\beta_{\mathrm{lo}},\beta_{\mathrm{hi}}]\gets[\alpha/M, \alpha], \widehat{\mathcal{Q}}\gets\emptyset$
    \STATE $\mathcal{S_C}^{(1)} \cup \mathcal{S_C}^{(2)} \gets \{(s_{k}(x_i,  y_i))_{k=1}^{K}\}_{i=1,N_{\mathcal{C}_1}+1}^{N_{\mathcal{C}_1},N_{\mathcal{C}_2}}
    ,\quad\{u_{m}\gets\textsc{UnifHypersphere}(K)\}_{m=1}^{M}$
    \WHILE{$|\mathcal{S}^{(1)}_{\mathcal{C}} \bigcap \widehat{\mathcal{Q}}| / N_{\mathcal{C}_1} \notin 1-\alpha\pm\epsilon$}
    \STATE $\beta\gets (\beta_{\mathrm{lo}}+\beta_{\mathrm{hi}}) / 2 $
    \STATE $\left\{\widetilde{q}_{m}\gets (1-\beta) \text{ empirical quantile of } \{ u_{m}^{\top}s_{i}\}_{s_i\in S_C^{(1)}}\right\}_{m=1}^{M}$
    \STATE $\widehat{\mathcal{Q}}\gets\bigcap_{m=1}^{M} H(u_{m}, \widetilde{q}_{m})$
    \STATE \textbf{if} $|\mathcal{S}^{(1)}_{\mathcal{C}} \bigcap \widehat{\mathcal{Q}}| / N_{\mathcal{C}_1} > 1-\alpha$ \textbf{then} $\beta_{\mathrm{lo}}\gets\beta$
    \textbf{else} $\beta_{\mathrm{hi}}\gets\beta$
    \ENDWHILE
    \STATE $\widehat{t}\gets (1-\alpha) \text{ empirical quantile of } \{\max_{m\in[M]} (u_{m}^{\top} s_{i} / \widetilde{q}_{m})\}_{s_i\in S_C^{(2)}}$
    \STATE{\textbf{Return} $\{(u_{m}, \widehat{t}\widetilde{q}_{m})\}_{m=1}^{M}$} 
    \end{algorithmic}
\end{algorithm}


\subsection{Predict-Then-Optimize}\label{section:csa_pto}
With this generalization of the score function, a natural question is how to leverage the resulting prediction regions $\mathcal{C}(x)$. 
For both classification and regression, $\mathcal{C}(x) = \bigcap_{m=1}^{M} \mathcal{C}_{m}(x)$ where $\mathcal{C}_{m}(x) := \{y \mid u_{m}^{\top}s(x, y)\le \widehat{q}_{m}\}$. For \textit{classification}, where $|\mathcal{Y}|\in\mathbb{N}$, explicit construction of $\mathcal{C}(x)$ is straightforward: for any $x$, explicitly constructing $\mathcal{C}(x)$ can be done by iterating through $y\in\mathcal{Y}$ and checking if $s(x,y)\in\widehat{\mathcal{Q}}$ by comparing $s(x,y)$ against each one of the thresholds $\widehat{q}_{m}$ after projection.

In the case of regression, however, the prediction region cannot be explicitly constructed in the general case, since $\mathcal{Y}$ contains uncountably many elements. In fact, explicit construction is generally not of interest for downstream regression applications. 
We, therefore, focus on one particular application, namely that of \cite{patel2023conformal} discussed in \Cref{section:bg_pred_opt}, and demonstrate the CSA prediction regions can be leveraged in their framework. Such an extension has natural applications to the settings discussed in this original study. For instance, the authors demonstrated the utility of their method in a robust traffic routing setting with $c$ being predicted traffic from a probabilistic weather model $q(C\mid X)$ for weather covariates $X$. An ensembling approach naturally emerges with having multiple predictive models, such as a $q_{2}(C\mid X)$ predicting traffic based instead on historical trends.

As in \Cref{section:csa_envelope}, we note that the below described algorithm is better understood with a visual accompaniment, which we provide in \Cref{section:csa_pto_viz}. \cite{patel2023conformal} demonstrated that solving the robust problem variant $w^{*}(x) := \min_{w} \max_{\widehat{c}\in\mathcal{C}(x)} f(w, \widehat{c})$ in a computationally efficient manner is feasible by performing gradient-based optimization on $w$, where the gradient $\nabla_{w} \phi(w)$ of $\phi(w):= \max_{\widehat{c}\in\mathcal{C}(x)} f(w, \widehat{c})$ can be computed by leveraging Danskin's Theorem so long as $\max_{\widehat{c} \in \mathcal{C}(x)} f(w, \widehat{c})$ is efficiently computable for any fixed $w$.
We focus on demonstrating that this remains the case for CSA, specifically considering the case where individual view score functions take the form of the ``GPCP'' score considered therein. In this setup, each constituent predictor is a generative model $q_{k}(C\mid X)$ from which $\{\widehat{c}_{kj}\}_{j=1}^{J_k}\sim q_{k}(C\mid X)$ samples are drawn. Note that $J_{k}$ need not be constant across $k$. The GPCP score, used to define the score components, is
\begin{equation}\label{eqn:score_gen_pp}
    s_{k}(x,c) = \min_{j\in1,...,J_k}\left[ || \widehat{c}_{kj} - c ||_{2} \right].
\end{equation}
Notably, this framework subsumes many standard regression settings, e.g., for a deterministic predictor, one can take $q_{k}(C\mid X) = \delta(f_{k}(X))$. 
To compute $\max_{\widehat{c} \in \mathcal{C}(x)} f(w, \widehat{c})$, we first let $\vec{j}\in\mathcal{J} = \{j_{1},...,j_{K}\}$ be an indexing tuple, where each $j_k\in\{1,...,J_{k}\}$. That is, each $\vec{j}$ is a vector that ``selects'' one sample per predictor. Notably then, the projection $u_{m}^{\top} s(\widehat{c}_{\vec{j}}, c)$ is convex in $c$, since the projection directions are all restricted to $\mathcal{S}_{+}^{K-1}$. Thus,
\begin{equation}
\begin{aligned}
c_{\vec{j}}^{*} &:= \argmax_{c} f(w, c)
\qquad \textrm{s.t.} \quad &u_{m}^{\top} s(\widehat{c}_{\vec{j}}, c) \le \widehat{q}_{m} \quad\forall m\in\{1,...,M\}
\end{aligned}
\end{equation}
remains a standard convex optimization problem. The final maximum can then be found by aggregation, namely $c^{*} = \argmax_{\vec{j}\in\mathcal{J}} f(w, c_{\vec{j}}^{*})$. While $|\mathcal{J}| = \prod_{k=1}^{K} J_{k}$, we note that certain cases of ensemble prediction, such as multi-view prediction, tend to have a limited number of predictors in practice, most typically $K = 2$ or $K = 3$. This coupled with the fact that computing over these indices is trivially parallelizable means this approach is still computationally tractable. The full procedure is outlined in \Cref{alg:CSA_pred_region} in \Cref{section:pto_alg} due to space limitations.

\section{Experiments}\label{section:experiments}
We now study CSA empirically across several tasks, demonstrating its coverage guarantees with reduced conservatism. We demonstrate improvements in an ImageNet classification task in \Cref{section:experiments_classification}, across real-data regression benchmark tasks in \Cref{section:experiments_regression} as proposed by \cite{fischer2023openml}, and in a downstream predict-then-optimize task in \Cref{section:experiments_CSA_reg}. We additionally assess the robustness of CSA to imbalanced ensembles and perform an ablation study of the two-stage calibration.


We note that the predictors and calibration and test sets were fixed across choices of calibration procedure for each experiment, meaning care had to be taken in partitioning $\mathcal{D_C} = \mathcal{D_C}^{(1)}\cup\mathcal{D_C}^{(2)}$ for CSA, where an insufficiently large $\mathcal{D_C}^{(1)}$ would result in poor estimation of the $\alpha$-quantile envelope and hence require a large adjustment $\widehat{t}$ factor and an insufficiently large $\mathcal{D_C}^{(2)}$ in the classical reduced predictive efficiency from conformal prediction. We note the splits in each of the sections that follow.

We compare against the methods presented in \Cref{section:related_works}, viz. the model selection of \cite{yang2024selection}, the aggregation methods of \cite{gasparin2024conformal}, and the single weighted score projection (VFCP) of \cite{luo2024weighted}. We additionally include the initial strategy discussed in \Cref{section:related_works}, in which the ensemble predictor is directly conformalized, using a natural aggregate ``ensemble'' score, given in the following sections. From the work of \cite{gasparin2024conformal}, we consider the following methods: the standard majority-vote $\mathcal{C}^{M}$, partially randomized thresholding $\mathcal{C}^{R}$, and fully randomized thresholding $\mathcal{C}^{U}$ approaches (see \Cref{section:comparisons_methods}). Notably, these methods do not lend themselves for use in the predict-then-optimize setting, so we eliminate them from consideration therein. VFCP can only be applied in classification settings; we, thus, do not compare to it across the regression tasks. Code will be made public upon acceptance.


\subsection{Classification Tasks}\label{section:experiments_classification}

\definecolor{lightgray}{gray}{0.9}

\begin{table*}[t]
\caption{\label{table:imagenet_results} Classification results are shown across tasks for $\alpha=0.10$, $\alpha=0.05$, and $\alpha=0.01$, with coverages in the top (grey) and average prediction set sizes (white) in the bottom of each row. Both were assessed over a batch of i.i.d. test samples (15\% of the validation set from ImageNet). Standard deviations and means were computed across 10 randomized draws of the calibration and test sets. 
}
\centering
\resizebox{0.95\textwidth}{!}{%
\begin{tabular}{c|ccccccccc}
\toprule
\textbf{Dataset}/$\boldsymbol{\alpha}$ & \textbf{ResNet} & \textbf{VGG} & \textbf{DenseNet} & \textbf{VFCP} & $\boldsymbol{\mathcal{C}^{M}}$ & $\boldsymbol{\mathcal{C}^{R}}$ & $\boldsymbol{\mathcal{C}^{U}}$ & \textbf{Ensemble} & \textbf{CSA} \\
\midrule
ImageNet & \cellcolor{lightgray} 0.901 (0.005) & \cellcolor{lightgray}  0.902 (0.003) & \cellcolor{lightgray}  0.902 (0.003) & \cellcolor{lightgray}  0.899 (0.004) & \cellcolor{lightgray}  0.938 (0.003) & \cellcolor{lightgray}  0.909 (0.004) & \cellcolor{lightgray}  0.9 (0.004) & \cellcolor{lightgray} 0.899 (0.004) & \cellcolor{lightgray}  0.9 (0.003) \\ 
$(\alpha=0.10)$ & 137.004 (1.98) &  136.116 (2.206) &  120.096 (2.427) &  46.063 (1.089) &  87.337 (1.604) &  82.746 (1.692) &  131.856 (2.378) & 69.123 (1.317) &  \cellcolor{yellow!25} \textbf{34.006 (0.924)} \\ 
\midrule
$(\alpha=0.05)$ & \cellcolor{lightgray} 0.95 (0.003) & \cellcolor{lightgray}    0.949 (0.004) & \cellcolor{lightgray}    0.952 (0.002) & \cellcolor{lightgray}    0.95 (0.003) & \cellcolor{lightgray}    0.975 (0.002) & \cellcolor{lightgray}    0.954 (0.004) & \cellcolor{lightgray}     0.95 (0.003) & \cellcolor{lightgray}    0.949 (0.002) & \cellcolor{lightgray}    0.95 (0.003) \\
  & 220.022 (2.072) &  229.523 (3.076) &  208.658 (2.016) &  78.108 (2.004) &  166.933 (2.157) &  143.323 (2.932) &  220.491 (2.773) &  112.161 (2.115) &  \cellcolor{yellow!25} \textbf{59.574 (3.382)} \\
\midrule
$(\alpha=0.01)$  & \cellcolor{lightgray} 0.99 (0.001) & \cellcolor{lightgray}     0.991 (0.001) & \cellcolor{lightgray}    0.989 (0.002) & \cellcolor{lightgray}     0.99 (0.001) & \cellcolor{lightgray}    0.997 (0.001) & \cellcolor{lightgray}     0.991 (0.002) & \cellcolor{lightgray}    0.99 (0.002) & \cellcolor{lightgray}     0.99 (0.002) & \cellcolor{lightgray}     0.99 (0.002)\\
 & 491.952 (6.353) &  726.028 (12.157) &  459.399 (6.739) &  \cellcolor{yellow!25} \textbf{194.691 (4.579)} &  580.592 (7.715) &  532.155 (24.829) &  559.188 (7.07) &  299.453 (6.526) &  \cellcolor{yellow!25} \textbf{201.32 (46.509)} \\
\bottomrule
\end{tabular}
}
\end{table*}

We first study the predictive efficiency of the aforementioned methods on the ImageNet classification task \cite{deng2009imagenet}. In particular, an ensemble was constructed from three separately trained deep learning architectures, namely ResNet-50, VGG-11, and DenseNet-121. Conformalization on the individual models was performed using the standard classification score function across all approaches, namely $s(x,y)=\sum_{j=1}^{l} \widehat{f}(x)_{\pi_{j}(x)}$ where $y=\pi_{l}(x)$ and $\pi(x)$ is the permutation of $\{1, \ldots, |\mathcal{Y}|\}$ that sorts $\widehat{f}\left(x\right)$ from most to least likely. Here, the ``Ensemble'' score was computed with the same $s(x,y)$, replacing $f_k(x)$ with the ensemble average probability, i.e. $\mu(x)_{j} := \sum_k (f_k(x))_{j} / K$.
Calibration was performed using 85\% of the ImageNet test set and assessment of the coverage and interval lengths on the remaining 15\%, with 10 trials conducted over randomized draws of these calibration and test sets. A $25/75\%$ split was used for $\mathcal{D_C}^{(1)}$-$\mathcal{D_C}^{(2)}$. The results are presented in \Cref{table:imagenet_results}; the full results across additional $\alpha$ is given in \Cref{section:full_imgnet_results}. We see that all the approaches exhibit the desired coverages across $\alpha$. However, CSA consistently produces significantly smaller prediction regions than both the individually conformalized models and alternate aggregation strategies.

\subsection{Regression Tasks}\label{section:experiments_regression}
We now similarly study the predictive efficiency of CSA across a suite of regression tasks from \cite{fischer2023openml}. The data for each task were split with 50/45/5\% for training, calibration, and testing for coverage and interval lengths, with five trials conducted over randomized selections of such sets. A $5/95\%$ split was used for $\mathcal{D_C}^{(1)}$-$\mathcal{D_C}^{(2)}$. The problem setup was replicated from \cite{gasparin2024merging}, in which four prediction methods were ensembled, namely an OLS model, a LASSO linear model, a random forest (RF), and an XGBoost model. A residual function was used as the score across all methods, namely $s(x,y) = | \widehat{f}(x) - y|$. Here, the ``Ensemble'' score was the standard $s(x,y) := \frac{| \mu(x) - y |}{\sigma(x)}$, where $(\mu(x),\sigma(x))$ are the ensemble mean and standard deviation. Prediction intervals could be analytically constructed for the $\mathcal{C}^{M}$, $\mathcal{C}^{R}$, and $\mathcal{C}^{U}$ methods. To assess CSA, however, a discretized grid $\mathcal{G}_Y\subset\mathcal{Y}$ of coarseness $\Delta y$ was considered, and an interval length estimate given by $\mathcal{L}(\mathcal{C}(x))\approx\Delta y\cdot|\{ y : y\in\mathcal{G}_Y, s(x, y)\in\widehat{\mathcal{Q}}\}| $. We also present an ablation, labeled ``Single-Stage,'' to demonstrate the two-stage calibration is necessary to retain coverage; this single-stage approach does not split $\mathcal{S}_C$ and instead directly computes $\{\widehat{q}_m\}$ on $\mathcal{S}_C$ per \Cref{section:csa_ordering}.

We provide the results for $\alpha=0.05$ and $\alpha=0.025$ to demonstrate the consistency of the method performance. A subset of the results is given in \Cref{table:partial_openml_results}; the full set of results is deferred to \Cref{section:full_openml_results}. As in the results of \Cref{section:experiments_classification}, we see that CSA retains the coverage guarantees typical of conformal prediction yet produces significantly smaller prediction intervals than both the individual models and the alternate aggregation strategies. We additionally see that the ``Single-Stage'' approach fails to retain coverage, demonstrating the necessity of the two-stage calibration. We provide a visual comparison of the prediction regions resulting from these methods in \Cref{section:pred_region_viz}.

We additionally assessed the robustness of our method to imbalanced ensembles. The experiments of \cite{gasparin2024merging} were conducted on a UCI benchmark task \cite{asuncion2007uci} with an ensemble of an OLS model, a LASSO linear model, a random forest, and an MLP, and they found the conformalized random forest to outperform all the proposed aggregation strategies, due to the lack of orthogonal information in considering the other predictors. We find that, in these degenerate cases, where the best decision is to simply choose a single predictor, our method outperforms other aggregation methods and nearly matches the performance of the best conformalized predictor in hindset; the results are presented in \Cref{section:uci_supplement} across a number of UCI benchmarks. 

 


\begin{table*}[t]
\caption{\label{table:partial_openml_results} The results for five distinct tasks are shown below for $\alpha=0.05$ (top five rows) and $\alpha=0.025$ (bottom five rows). For each, the average coverages (grey rows) and prediction set lengths (white rows) with standard deviations are given, both assessed over 5 randomized draws of the training, calibration, and test sets. In cases where the method failed to achieve sufficient coverage (i.e. $< .93$ for $\alpha=0.05$ and $< 0.96$ for $\alpha=0.025$), we do not include it in comparison for set length.}
\resizebox{\textwidth}{!}{%

\begin{tabular}{c|ccccccccccc}
\toprule
\textbf{Dataset}/$\boldsymbol{\alpha}$ & \textbf{OLS} & \textbf{LASSO} & \textbf{RF} & \textbf{XGBoost} & $\boldsymbol{\mathcal{C}^{M}}$ & $\boldsymbol{\mathcal{C}^{R}}$ & $\boldsymbol{\mathcal{C}^{U}}$ & \textbf{Ensemble} & \textbf{\textit{Single-Stage}} & \textbf{CSA} \\
\midrule
361234 & \cellcolor{lightgray} 0.97 (0.011) & \cellcolor{lightgray} 0.966 (0.011) & \cellcolor{lightgray} 0.939 (0.002) & \cellcolor{lightgray} 0.954 (0.006) & \cellcolor{lightgray} 0.956 (0.011) & \cellcolor{lightgray} 0.948 (0.01) & \cellcolor{lightgray} 0.96 (0.013) & \cellcolor{lightgray} 0.95 (0.006) & \cellcolor{lightgray} \textit{0.955 (0.013)} & \cellcolor{lightgray} 0.957 (0.01) \\
$(\alpha=0.05)$ & 9.673 (0.160) & 9.645 (0.154) & 10.080 (0.160) & 9.157 (0.052) & 9.196 (0.123) & 8.703 (0.086) & 9.524 (0.056) & 17.759 (0.275) & \textit{7.646 (0.073)} & \cellcolor{yellow!25} \textbf{7.688 (0.181)} \\
\midrule
361235 & \cellcolor{lightgray} 0.947 (0.0) & \cellcolor{lightgray} 0.945 (0.005) & \cellcolor{lightgray} 0.968 (0.016) & \cellcolor{lightgray} 0.95 (0.005) & \cellcolor{lightgray} 0.955 (0.016) & \cellcolor{lightgray} 0.897 (0.005) & \cellcolor{lightgray} 0.953 (0.011) & \cellcolor{lightgray} 0.932 (0.021) & \cellcolor{lightgray} \textit{0.745 (0.011)} & \cellcolor{lightgray} 0.984 (0.005) \\
$(\alpha=0.05)$ & 20.961 (0.651) & 24.241 (0.246) & \cellcolor{yellow!25} \textbf{10.096 (0.587)} & 11.387 (0.452) & 11.782 (0.057) & --- & 16.088 (0.118) & 15.823 (1.272) & \textit{6.162 (0.458)} & 11.695 (0.266) \\
\midrule
361236 & \cellcolor{lightgray} 0.975 (0.008) & \cellcolor{lightgray} 0.975 (0.008) & \cellcolor{lightgray} 0.961 (0.0) & \cellcolor{lightgray} 0.948 (0.012) & \cellcolor{lightgray} 0.948 (0.012) & \cellcolor{lightgray} 0.938 (0.012) & \cellcolor{lightgray} 0.965 (0.008) & \cellcolor{lightgray} 0.934 (0.004) & \cellcolor{lightgray} \textit{0.94 (0.004)} & \cellcolor{lightgray} 0.963 (0.004) \\
$(\alpha = 0.05)$ & $4.44\mathrm{e}4\ (1.17\mathrm{e}3)$ & $4.45\mathrm{e}4\ (1.23\mathrm{e}3)$ & $5.08\mathrm{e}4\ (3.86\mathrm{e}2)$ & $4.10\mathrm{e}4\ (1.22\mathrm{e}3)$ & $4.32\mathrm{e}4\ (1.00\mathrm{e}3)$ & $4.09\mathrm{e}4\ (1.09\mathrm{e}3)$ & $4.44\mathrm{e}4\ (8.52\mathrm{e}2)$ & $6.05\mathrm{e}4\ (2.41\mathrm{e}3)$ & $\mathit{3.10\mathrm{e}4\ (2.48\mathrm{e}3)}$ & \cellcolor{yellow!25} $\mathbf{3.34\mathrm{e}4\ (1.28\mathrm{e}3)}$ \\
\midrule
361237 & \cellcolor{lightgray} 0.969 (0.023) & \cellcolor{lightgray} 0.969 (0.023) & \cellcolor{lightgray} 0.981 (0.0) & \cellcolor{lightgray} 0.923 (0.0) & \cellcolor{lightgray} 0.954 (0.015) & \cellcolor{lightgray} 0.9 (0.008) & \cellcolor{lightgray} 0.969 (0.023) & \cellcolor{lightgray} 0.885 (0.038) & \cellcolor{lightgray} \textit{0.8 (0.015)} & \cellcolor{lightgray} 0.977 (0.008) \\
$(\alpha=0.05)$ & 44.019 (0.990) & 44.069 (1.115) & 27.035 (1.014) & --- & 26.524 (1.244) & --- & 31.967 (1.118) & --- & \textit{14.473 (0.503)} & \cellcolor{yellow!25} \textbf{23.145 (0.199)} \\
\midrule
361241 & \cellcolor{lightgray} 0.954 (0.001) & \cellcolor{lightgray} 0.956 (0.001) & \cellcolor{lightgray} 0.944 (0.005) & \cellcolor{lightgray} 0.957 (0.002) & \cellcolor{lightgray} 0.954 (0.002) & \cellcolor{lightgray} 0.923 (0.0) & \cellcolor{lightgray} 0.952 (0.0) & \cellcolor{lightgray} 0.949 (0.001) & \cellcolor{lightgray} \textit{0.917 (0.006)} & \cellcolor{lightgray} 0.951 (0.001) \\
$(\alpha=0.05)$ & 19.133 (0.062) & 20.245 (0.095) & 18.102 (0.055) & 18.482 (0.062) & 17.958 (0.062) & --- & 18.932 (0.034) & 29.548 (0.191) & \textit{15.199 (0.427)} & \cellcolor{yellow!25} \textbf{17.328 (0.097)} \\
\midrule\midrule
361234 & \cellcolor{lightgray} 0.987 (0.008) & \cellcolor{lightgray} 0.987 (0.008) & \cellcolor{lightgray} 0.974 (0.004) & \cellcolor{lightgray} 0.977 (0.008) & \cellcolor{lightgray} 0.982 (0.008) & \cellcolor{lightgray} 0.971 (0.01) & \cellcolor{lightgray} 0.981 (0.01) & \cellcolor{lightgray} 0.97 (0.008) & \cellcolor{lightgray} \textit{0.976 (0.01)} & \cellcolor{lightgray} 0.973 (0.006) \\
$(\alpha=0.025)$ & 11.939 (0.137) & 11.871 (0.084) & 12.484 (0.168) & 11.972 (0.009) & 11.587 (0.110) & 11.157 (0.086) & 11.965 (0.050) & 25.598 (0.974) & \textit{9.306 (0.259)} & \cellcolor{yellow!25} \textbf{8.855 (0.059)} \\
\midrule
361235 & \cellcolor{lightgray} 0.987 (0.0) & \cellcolor{lightgray} 0.982 (0.011) & \cellcolor{lightgray} 0.979 (0.011) & \cellcolor{lightgray} 0.984 (0.005) & \cellcolor{lightgray} 0.989 (0.005) & \cellcolor{lightgray} 0.966 (0.016) & \cellcolor{lightgray} 0.976 (0.005) & \cellcolor{lightgray} 0.958 (0.021) & \cellcolor{lightgray} \textit{0.889 (0.011)} & \cellcolor{lightgray} 0.989 (0.005) \\
$(\alpha=0.025)$ & 24.595 (0.825) & 28.841 (1.129) & \cellcolor{yellow!25} \textbf{11.811 (0.992)} & 14.237 (0.786) & 14.472 (0.172) & \cellcolor{yellow!25} \textbf{12.278 (0.026)} & 19.231 (0.356) & --- & \textit{7.719 (0.467)} & \cellcolor{yellow!25} \textbf{12.563 (0.766)} \\
\midrule
361236 & \cellcolor{lightgray} 0.992 (0.004) & \cellcolor{lightgray} 0.992 (0.004) & \cellcolor{lightgray} 0.981 (0.0) & \cellcolor{lightgray} 0.965 (0.008) & \cellcolor{lightgray} 0.975 (0.008) & \cellcolor{lightgray} 0.965 (0.008) & \cellcolor{lightgray} 0.977 (0.012) & \cellcolor{lightgray} 0.955 (0.008) & \cellcolor{lightgray} \textit{0.955 (0.012)} & \cellcolor{lightgray} 0.973 (0.004) \\
$(\alpha = 0.025)$ & $4.86\mathrm{e}4\ (8.74\mathrm{e}2)$ & $4.86\mathrm{e}4\ (8.68\mathrm{e}2)$ & $5.61\mathrm{e}4\ (3.69\mathrm{e}2)$ & $4.66\mathrm{e}4\ (1.57\mathrm{e}3)$ & $4.76\mathrm{e}4\ (8.10\mathrm{e}2)$ & $4.57\mathrm{e}4\ (1.06\mathrm{e}3)$ & $4.92\mathrm{e}4\ (7.53\mathrm{e}2)$ & --- & $\mathit{3.29\mathrm{e}4\ (3.11\mathrm{e}3)}$ & \cellcolor{yellow!25} $\mathbf{3.58\mathrm{e}4\ (1.95\mathrm{e}3)}$ \\
\midrule
361237 & \cellcolor{lightgray} 0.981 (0.0) & \cellcolor{lightgray} 0.981 (0.0) & \cellcolor{lightgray} 0.981 (0.0) & \cellcolor{lightgray} 0.977 (0.008) & \cellcolor{lightgray} 0.962 (0.0) & \cellcolor{lightgray} 0.962 (0.0) & \cellcolor{lightgray} 0.977 (0.008) & \cellcolor{lightgray} 0.965 (0.008) & \cellcolor{lightgray} \textit{0.927 (0.031)} & \cellcolor{lightgray} 0.981 (0.0) \\
$(\alpha=0.025)$ & 47.738 (0.542) & 47.440 (0.959) & 30.785 (0.037) & \cellcolor{yellow!25} \textbf{26.208 (0.897)} & 30.554 (0.561) & \cellcolor{yellow!25} \textbf{27.182 (0.803)} & 35.982 (0.619) & 67.660 (6.380) & \textit{18.214 (0.436)} & \cellcolor{yellow!25} \textbf{26.897 (0.515)} \\
\midrule
361241 & \cellcolor{lightgray} 0.979 (0.001) & \cellcolor{lightgray} 0.978 (0.001) & \cellcolor{lightgray} 0.976 (0.001) & \cellcolor{lightgray} 0.978 (0.001) & \cellcolor{lightgray} 0.978 (0.0) & \cellcolor{lightgray} 0.964 (0.002) & \cellcolor{lightgray} 0.977 (0.0) & \cellcolor{lightgray} 0.972 (0.002) & \cellcolor{lightgray} \textit{0.958 (0.003)} & \cellcolor{lightgray} 0.979 (0.0) \\
$(\alpha=0.025)$ & 21.772 (0.085) & 23.089 (0.106) & 21.543 (0.009) & 21.454 (0.109) & 20.862 (0.088) & \cellcolor{yellow!25} \textbf{19.291 (0.060)} & 21.905 (0.041) & 40.082 (0.045) & \textit{17.765 (0.329)} & 19.897 (0.062) \\
\bottomrule
\end{tabular}
}
\end{table*}

\subsection{CSA Predict-Then-Optimize}\label{section:experiments_CSA_reg}
We now study a real-world predict-then-optimize traffic routing task, from \cite{patel2023conformal}. In this task, a time series of $T$ preceding precipitations is used to predict future precipitations and, in turn, future traffic, as fully described in \Cref{appendix:experiments_routing}. We consider the traffic routing problem for a fixed source-target pair $(s,t)$ over the graph of Manhattan, where $|\mathcal{V}| = 4584$ and $|\mathcal{E}| = 9867$. Formally,
\begin{gather}\label{equation:flow_lp}
w^{*}(x) := \min_{w} \max_{\widehat{c}\in\mathcal{C}(x)} \widehat{c}^{T} w 
\quad
\textrm{s.t.} 
\quad w\in [0,1]^{\mathcal{E}}, Aw = b, \mathcal{P}_{X,C}(C\in\mathcal{C}(X)) \ge 1-\alpha \nonumber
\end{gather}
where $x\in\mathbb{R}^{T\times H\times W}$ are the previous precipitation readings, $w_{e}\in\mathbb{R}^{|\mathcal{E}|}$ the traffic proportion routed along road $e$, $c\in\mathbb{R}^{|\mathcal{E}|}$ the transit times anticipated across roads, $A\in\mathbb{R}^{|\mathcal{V}|\times |\mathcal{E}|}$ the graph incidence matrix, and $b\in\mathbb{R}^{|\mathcal{V}|}$ the vector that specifies the routing problem, in which $b_{s} = 1, b_{t} = -1,$ and $b_{k} = 0$ for $s$ the travel source node, $t$ the terminal node, and $k\notin\{s,t\}$ all other nodes.  


We then consider two probabilistic models for traffic prediction, namely one based on the classical probabilistic Lagrangian integro-difference approach (STEPS) of \cite{pulkkinen2019pysteps} and one on the modern latent diffusion model (LDM) approach of \cite{gao2024prediff}. As a result of the higher inference cost of the latter, we consider the setup where $J_{1} > J_{2}$, specifically with $J_{1}=4$ and $J_{2}=1$, highlighting the flexibility of non-uniform sampling from predictors discussed in \Cref{section:csa_pto}. As discussed in \Cref{section:experiments}, the alternate aggregation strategies do not lend themselves for use in this setting. We, therefore, only compare CSA to the separate conformalizations of the two predictors, with the score from \Cref{eqn:score_gen_pp}.
We here evaluate the methods using the expected suboptimality gap proportion, $\Delta_{\%} = \mathbb{E}_{X}[\Delta(X,C(X)) / \min_{w} f(w, C(X))],$ where $\Delta$ is defined as discussed in \Cref{section:bg_pred_opt}. This measures the conservatism of the robust optimal value and is bounded in $[0,1]$. 

Experiments were conducted with $|\mathcal{D}_{C}|=200$, with a $20/80\%$ split used for $\mathcal{D_C}^{(1)}$-$\mathcal{D_C}^{(2)}$. The suboptimality was then computed across 100 i.i.d. test samples. To assess the improvement, we conducted two paired t-tests, where $H_{0}:\Delta^{(\mathrm{CSA})}_{\%} = \Delta^{(\mathrm{STEPS})}_{\%}$ and $H_{1}: \Delta^{(\mathrm{CSA})}_{\%} < \Delta^{(\mathrm{STEPS})}_{\%}$ and similarly for $\Delta^{(\mathrm{CSA})}_{\%}$ and $\Delta^{(\mathrm{LDM})}_{\%}$. The results are provided in \Cref{table:pred_opt_results}, from which we find that CSA significantly reduces the suboptimality after accounting for Bonferroni multiple testing. We see that, while conformalization of either of the two views individually already produces the desired coverage, CSA produces more informative prediction regions, and hence less conservative robust upper bounds. 

\begin{table} [H]
\centering
\caption{\label{table:pred_opt_results} Coverages for $\alpha=0.05$ for the individually conformalized and CSA approach and p-values of the paired t-tests comparing $\Delta_{\%}$ are shown, both computed over 100 i.i.d. test samples.}
\resizebox{.5\columnwidth}{!}{%
\begin{tabular}{cccc|l}
\toprule
\multicolumn{3}{c}{\textbf{Coverage}} & & \textbf{P-values for $H_1$} \\
\cmidrule(r){1-3} \cmidrule(l){5-5}
STEPS & LDM & CSA & & $\Delta^{(\mathrm{CSA})}_{\%} < \Delta^{(\mathrm{STEPS})}_{\%}$: $3.61\times10^{-4}$ \\
0.981 & 0.962 & 0.968 & & $\Delta^{(\mathrm{CSA})}_{\%} < \Delta^{(\mathrm{LDM})}_{\%}$: $9.50\times10^{-4}$ \\
\bottomrule
\end{tabular}
}
\end{table}

\section{Discussion}\label{section:discussion}
We have presented a framework for producing informative prediction regions in ensemble predictor pipelines, 
suggesting many directions for extension. One is in extracting insights of the relative predictor uncertainties from the data-driven $\lesssim$. Another is the integration of CSA with \cite{chenreddy2024end}, which proposed an end-to-end extension to \cite{patel2023conformal}. Finally, given the prevalence of sensor fusion in robotics, another avenue is to study the use of CSA in robust control.

\bibliographystyle{unsrt}  
\bibliography{references}

\begin{thebibliography}{10}

\bibitem{schapire1999brief}
Robert~E Schapire et~al.
\newblock A brief introduction to boosting.
\newblock In {\em Ijcai}, volume~99, pages 1401--1406. Citeseer, 1999.

\bibitem{zhang2012ensemble}
Cha Zhang and Yunqian Ma.
\newblock {\em Ensemble machine learning}, volume 144.
\newblock Springer, 2012.

\bibitem{dietterich2000ensemble}
Thomas~G Dietterich.
\newblock Ensemble methods in machine learning.
\newblock In {\em International workshop on multiple classifier systems}, pages 1--15. Springer, 2000.

\bibitem{zhang2023adding}
Lvmin Zhang, Anyi Rao, and Maneesh Agrawala.
\newblock Adding conditional control to text-to-image diffusion models.
\newblock In {\em Proceedings of the IEEE/CVF International Conference on Computer Vision}, pages 3836--3847, 2023.

\bibitem{radford2021learning}
Alec Radford, Jong~Wook Kim, Chris Hallacy, Aditya Ramesh, Gabriel Goh, Sandhini Agarwal, Girish Sastry, Amanda Askell, Pamela Mishkin, Jack Clark, et~al.
\newblock Learning transferable visual models from natural language supervision.
\newblock In {\em International conference on machine learning}, pages 8748--8763. PMLR, 2021.

\bibitem{sun2013survey}
Shiliang Sun.
\newblock A survey of multi-view machine learning.
\newblock {\em Neural computing and applications}, 23:2031--2038, 2013.

\bibitem{zhao2017multi}
Jing Zhao, Xijiong Xie, Xin Xu, and Shiliang Sun.
\newblock Multi-view learning overview: Recent progress and new challenges.
\newblock {\em Information Fusion}, 38:43--54, 2017.

\bibitem{yan2021deep}
Xiaoqiang Yan, Shizhe Hu, Yiqiao Mao, Yangdong Ye, and Hui Yu.
\newblock Deep multi-view learning methods: A review.
\newblock {\em Neurocomputing}, 448:106--129, 2021.

\bibitem{yuan2018multi}
Ye~Yuan, Guangxu Xun, Kebin Jia, and Aidong Zhang.
\newblock A multi-view deep learning framework for eeg seizure detection.
\newblock {\em IEEE journal of biomedical and health informatics}, 23(1):83--94, 2018.

\bibitem{li2018review}
Yifeng Li, Fang-Xiang Wu, and Alioune Ngom.
\newblock A review on machine learning principles for multi-view biological data integration.
\newblock {\em Briefings in bioinformatics}, 19(2):325--340, 2018.

\bibitem{yuan2017multi}
Ye~Yuan, Guangxu Xun, Kebin Jia, and Aidong Zhang.
\newblock A multi-view deep learning method for epileptic seizure detection using short-time fourier transform.
\newblock In {\em Proceedings of the 8th ACM international conference on bioinformatics, computational biology, and health informatics}, pages 213--222, 2017.

\bibitem{brena2020choosing}
Ramon~F Brena, Antonio~A Aguileta, Luis~A Trejo, Erik Molino-Minero-Re, and Oscar Mayora.
\newblock Choosing the best sensor fusion method: A machine-learning approach.
\newblock {\em Sensors}, 20(8):2350, 2020.

\bibitem{blasch2021machine}
Erik Blasch, Tien Pham, Chee-Yee Chong, Wolfgang Koch, Henry Leung, Dave Braines, and Tarek Abdelzaher.
\newblock Machine learning/artificial intelligence for sensor data fusion--opportunities and challenges.
\newblock {\em IEEE Aerospace and Electronic Systems Magazine}, 36(7):80--93, 2021.

\bibitem{alatise2020review}
Mary~B Alatise and Gerhard~P Hancke.
\newblock A review on challenges of autonomous mobile robot and sensor fusion methods.
\newblock {\em IEEE Access}, 8:39830--39846, 2020.

\bibitem{subedar2019uncertainty}
Mahesh Subedar, Ranganath Krishnan, Paulo~Lopez Meyer, Omesh Tickoo, and Jonathan Huang.
\newblock Uncertainty-aware audiovisual activity recognition using deep bayesian variational inference.
\newblock In {\em Proceedings of the IEEE/CVF international conference on computer vision}, pages 6301--6310, 2019.

\bibitem{tian2020uno}
Junjiao Tian, Wesley Cheung, Nathaniel Glaser, Yen-Cheng Liu, and Zsolt Kira.
\newblock Uno: Uncertainty-aware noisy-or multimodal fusion for unanticipated input degradation.
\newblock In {\em 2020 IEEE International Conference on Robotics and Automation (ICRA)}, pages 5716--5723. IEEE, 2020.

\bibitem{denker1990transforming}
John Denker and Yann LeCun.
\newblock Transforming neural-net output levels to probability distributions.
\newblock {\em Advances in neural information processing systems}, 3, 1990.

\bibitem{havasi2020training}
Marton Havasi, Rodolphe Jenatton, Stanislav Fort, Jeremiah~Zhe Liu, Jasper Snoek, Balaji Lakshminarayanan, Andrew~M Dai, and Dustin Tran.
\newblock Training independent subnetworks for robust prediction.
\newblock {\em arXiv preprint arXiv:2010.06610}, 2020.

\bibitem{malinin2018predictive}
Andrey Malinin and Mark Gales.
\newblock Predictive uncertainty estimation via prior networks.
\newblock {\em Advances in neural information processing systems}, 31, 2018.

\bibitem{rahaman2021uncertainty}
Rahul Rahaman et~al.
\newblock Uncertainty quantification and deep ensembles.
\newblock {\em Advances in neural information processing systems}, 34:20063--20075, 2021.

\bibitem{abdar2021review}
Moloud Abdar, Farhad Pourpanah, Sadiq Hussain, Dana Rezazadegan, Li~Liu, Mohammad Ghavamzadeh, Paul Fieguth, Xiaochun Cao, Abbas Khosravi, U~Rajendra Acharya, et~al.
\newblock A review of uncertainty quantification in deep learning: Techniques, applications and challenges.
\newblock {\em Information fusion}, 76:243--297, 2021.

\bibitem{carrete2023deep}
Jes{\'u}s Carrete, Hadri{\'a}n Montes-Campos, Ralf Wanzenb{\"o}ck, Esther Heid, and Georg~KH Madsen.
\newblock Deep ensembles vs committees for uncertainty estimation in neural-network force fields: Comparison and application to active learning.
\newblock {\em The Journal of Chemical Physics}, 158(20), 2023.

\bibitem{angelopoulos2021gentle}
Anastasios~N Angelopoulos and Stephen Bates.
\newblock A gentle introduction to conformal prediction and distribution-free uncertainty quantification.
\newblock {\em arXiv preprint arXiv:2107.07511}, 2021.

\bibitem{shafer2008tutorial}
Glenn Shafer and Vladimir Vovk.
\newblock A tutorial on conformal prediction.
\newblock {\em Journal of Machine Learning Research}, 9(3), 2008.

\bibitem{gasparin2024conformal}
Matteo Gasparin and Aaditya Ramdas.
\newblock Conformal online model aggregation.
\newblock {\em arXiv preprint arXiv:2403.15527}, 2024.

\bibitem{trunov2023online}
Vladimir~G Trunov and Vladimir~V V’yugin.
\newblock Online aggregation of conformal predictive systems.
\newblock In {\em Conformal and Probabilistic Prediction with Applications}, pages 430--449. PMLR, 2023.

\bibitem{yang2024selection}
Yachong Yang and Arun~Kumar Kuchibhotla.
\newblock Selection and aggregation of conformal prediction sets.
\newblock {\em Journal of the American Statistical Association}, pages 1--13, 2024.

\bibitem{v2023online}
VV~V’yugin and VG~Trunov.
\newblock Online aggregation of conformal forecasting systems.
\newblock {\em Journal of Communications Technology and Electronics}, 68(Suppl 2):S239--S253, 2023.

\bibitem{gasparin2024merging}
Matteo Gasparin and Aaditya Ramdas.
\newblock Merging uncertainty sets via majority vote.
\newblock {\em arXiv preprint arXiv:2401.09379}, 2024.

\bibitem{rousseeuw1998computing}
Peter~J Rousseeuw and Anja Struyf.
\newblock Computing location depth and regression depth in higher dimensions.
\newblock {\em Statistics and Computing}, 8:193--203, 1998.

\bibitem{serfling2002quantile}
Robert Serfling.
\newblock Quantile functions for multivariate analysis: approaches and applications.
\newblock {\em Statistica Neerlandica}, 56(2):214--232, 2002.

\bibitem{kong2012quantile}
Linglong Kong and Ivan Mizera.
\newblock Quantile tomography: Using quantiles with multivariate data.
\newblock {\em Statistica Sinica}, pages 1589--1610, 2012.

\bibitem{paindaveine2011directional}
Davy Paindaveine and Miroslav {\v{S}}iman.
\newblock On directional multiple-output quantile regression.
\newblock {\em Journal of Multivariate Analysis}, 102(2):193--212, 2011.

\bibitem{hallin2010multivariate}
Marc Hallin, Davy Paindaveine, and Marianna Šiman.
\newblock Multivariate quantiles and multiple-output regression quantiles: From \({\ell}_1\) optimization to halfspace depth.
\newblock {\em Annals of Statistics}, 38:635--669, 2010.

\bibitem{cresswell2024conformal}
Jesse~C Cresswell, Yi~Sui, Bhargava Kumar, and No{\"e}l Vouitsis.
\newblock Conformal prediction sets improve human decision making.
\newblock {\em arXiv preprint arXiv:2401.13744}, 2024.

\bibitem{patel2023conformal}
Yash Patel, Sahana Rayan, and Ambuj Tewari.
\newblock Conformal contextual robust optimization.
\newblock {\em arXiv preprint arXiv:2310.10003}, 2023.

\bibitem{ohmori2021predictive}
Shunichi Ohmori.
\newblock A predictive prescription using minimum volume k-nearest neighbor enclosing ellipsoid and robust optimization.
\newblock {\em Mathematics}, 9(2):119, 2021.

\bibitem{chenreddy2022data}
Abhilash~Reddy Chenreddy, Nymisha Bandi, and Erick Delage.
\newblock Data-driven conditional robust optimization.
\newblock {\em Advances in Neural Information Processing Systems}, 35:9525--9537, 2022.

\bibitem{sun2023predict}
Chunlin Sun, Linyu Liu, and Xiaocheng Li.
\newblock Predict-then-calibrate: A new perspective of robust contextual lp.
\newblock {\em arXiv preprint arXiv:2305.15686}, 2023.

\bibitem{luo2024weighted}
Rui Luo and Zhixin Zhou.
\newblock Weighted aggregation of conformity scores for classification.
\newblock {\em arXiv preprint arXiv:2407.10230}, 2024.

\bibitem{schnabel2022simple}
Stefan Schnabel and Wolfhard Janke.
\newblock A simple algorithm for uniform sampling on the surface of a hypersphere.
\newblock {\em arXiv preprint arXiv:2204.14004}, 2022.

\bibitem{fischer2023openml}
Sebastian~Felix Fischer, Matthias Feurer, and Bernd Bischl.
\newblock Openml-ctr23--a curated tabular regression benchmarking suite.
\newblock In {\em AutoML Conference 2023 (Workshop)}, 2023.

\bibitem{deng2009imagenet}
Jia Deng, Wei Dong, Richard Socher, Li-Jia Li, Kai Li, and Li~Fei-Fei.
\newblock Imagenet: A large-scale hierarchical image database.
\newblock In {\em 2009 IEEE conference on computer vision and pattern recognition}, pages 248--255. Ieee, 2009.

\bibitem{asuncion2007uci}
Arthur Asuncion, David Newman, et~al.
\newblock Uci machine learning repository, 2007.

\bibitem{pulkkinen2019pysteps}
Seppo Pulkkinen, Daniele Nerini, Andr{\'e}s~A P{\'e}rez~Hortal, Carlos Velasco-Forero, Alan Seed, Urs Germann, and Loris Foresti.
\newblock Pysteps: An open-source python library for probabilistic precipitation nowcasting (v1. 0).
\newblock {\em Geoscientific Model Development}, 12(10):4185--4219, 2019.

\bibitem{gao2024prediff}
Zhihan Gao, Xingjian Shi, Boran Han, Hao Wang, Xiaoyong Jin, Danielle Maddix, Yi~Zhu, Mu~Li, and Yuyang~Bernie Wang.
\newblock Prediff: Precipitation nowcasting with latent diffusion models.
\newblock {\em Advances in Neural Information Processing Systems}, 36, 2024.

\bibitem{chenreddy2024end}
Abhilash Chenreddy and Erick Delage.
\newblock End-to-end conditional robust optimization.
\newblock {\em arXiv preprint arXiv:2403.04670}, 2024.

\bibitem{boeing2024modeling}
Geoff Boeing.
\newblock Modeling and analyzing urban networks and amenities with osmnx.
\newblock 2024.

\end{thebibliography}

\newpage
\appendix

\section{CSA Visual Walkthrough}\label{section:csa_viz}
We walk through a visual presentation of the approach below to supplement the textual description in the main text. We start with a collection of multivariate calibration scores $\mathcal{S}_C$ with $s\in\mathcal{S}$ being $\in\mathbb{R}^{K}$. For the purposes of visualization in this section, we have $K = 2$. We first partition the score evaluations $\mathcal{S_C} = \mathcal{S}_C^{(1)}\cup \mathcal{S}_C^{(2)}$, with a subset $\mathcal{S}_C^{(1)}$ used to define the pre-ordering and the remainder $\mathcal{S}_C^{(2)}$ to define the multivariate quantile. 

\begin{figure}[H]
  \centering 
  \includegraphics[width=0.425\textwidth]{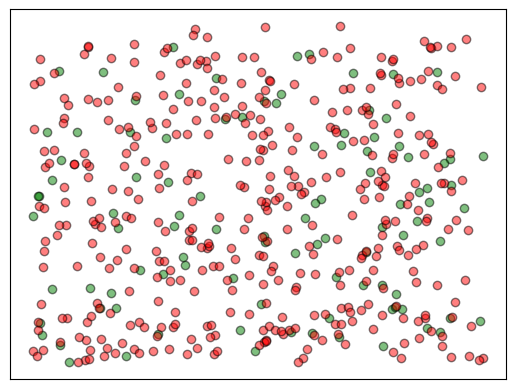}
  \caption{The calibration score evaluations are first split between those used to define the pre-ordering (green) $\mathcal{S}_C^{(1)}$ and those used to define the final multivariate quantile (red) $\mathcal{S}_C^{(2)}$.}
\end{figure}

We first wish to define the pre-ordering over $\mathcal{S}_C^{(1)}$. As described in the main text, the goal is to define this using an indexed family of sets $\mathcal{A}_{t}$ with index $t\in\mathbb{R}$, after which the multivariate quantile approach reduces to the univariate quantile formulation. To ensure the final envelope over $\mathcal{S}_C^{(2)}$ remains as tight as possible, we wish to define this family in a data-driven fashion. Critically, the \textit{shape} of this tightest envelope around $\mathcal{S}_C^{(2)}$ will vary across $\alpha$, meaning we must define the family \textit{separately} for each choice of $\alpha$. We expect the contour of the tightest $\alpha$ envelope for $\mathcal{S}_C^{(1)}$ will be similar to that over $\mathcal{S}_C^{(2)}$, motivating such a choice to define the indexing family. To do this, we project $\mathcal{S}_C^{(1)}$ along a number of directions, finding the $\beta$ quantile along each, in turn defining a half-plane, where $\beta$ is as described in \Cref{section:csa_envelope}.


\begin{figure}[H]
  \centering 
  \includegraphics[width=0.67\textwidth]{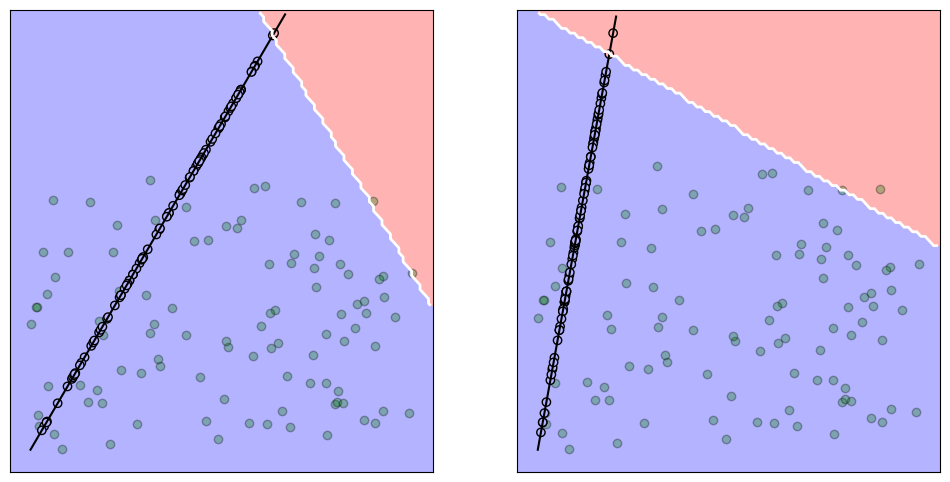}
  \caption{The pre-ordering points are projected across a number of directions, after which the $\beta$ quantile is used to define a direction quantile. This defines a half-plane of points that are in the region (blue) and those outside (red).}
\end{figure}


We then iteratively update $\beta$ in the manner described in \Cref{alg:CSA} to obtain $\beta^{*}$, namely the minimum value for which the region given by the intersection of the corresponding half-planes covers roughly $1-\alpha$ of $\mathcal{S}_C^{(1)}$.

\begin{figure}[H]
  \centering 
  \includegraphics[width=0.65\textwidth]{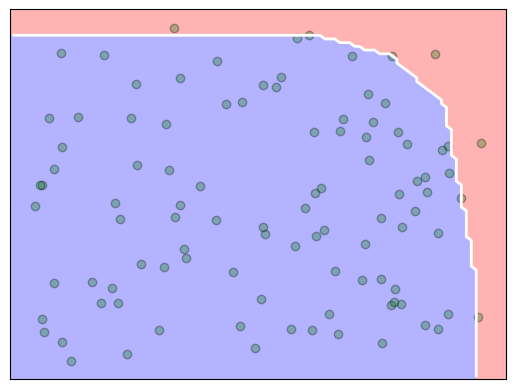}
  \caption{We use the intersection of hyperplanes to define the quantile envelope, seeking $\beta^{*}$ that achieves the desired coverage.}
\end{figure}

Once this $1-\alpha$ quantile envelope of $\mathcal{S}_C^{(1)}$ is found, we define $\mathcal{A}_{1}$ to be such an envelope, with which future points can now be partially ordered. That is, for any point $s\in\mathbb{R}^{K}$ notice that we can unambiguously associate it with $t(s) := \min \{t\in\mathbb{R}: s\in\mathcal{A}_{t}\}$. Intuitively, this is the $t$ where the contour ``intersects'' $s$. Notably, now that the partial ordering has been defined, the points of $\mathcal{S}_C^{(1)}$ are no longer used. It would be of interest to investigate whether a concurrent definition of the partial ordering and final calibration is possible without such data splitting in future work.

\begin{figure}[H]
  \centering 
  \includegraphics[width=0.65\textwidth]{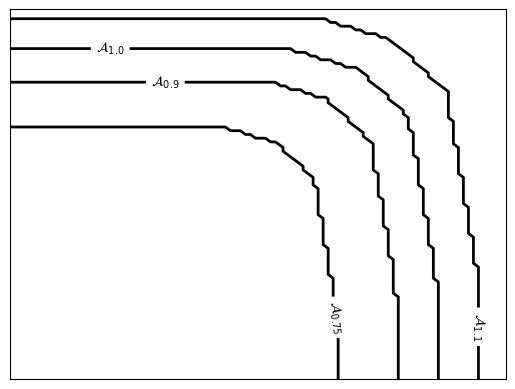}
  \caption{Using the quantile envelope, the family of nested sets $\mathcal{A}_t$ is defined, in turn defining a partial ordering over $\mathbb{R}^{K}$.}
\end{figure}

With this $\mathcal{A}_t$, we find the final $\widehat{q}$ simply by mapping the points of $\mathcal{S}_C^{(2)}$ to their corresponding $t(s)$ values in the aforementioned fashion and performing standard conformal prediction. As discussed, if the envelope has a similar structure to that found over $\mathcal{S}_C^{(1)}$, the envelope should be adjusted by only a minor amount.

\begin{figure}[H]
  \centering 
  \includegraphics[width=0.62\textwidth]{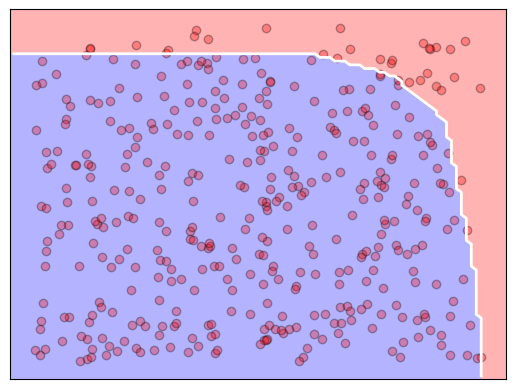}
  \caption{Using the nested family of sets, we expand or contract the envelope appropriately using the data of $\mathcal{S}_C^{(2)}$ to find the final adjustment factor.}
\end{figure}

\section{Multivariate Score Coverage}\label{section:coverage_proof}
The proof of the multivariate extension of conformal prediction follows in precisely the same manner as that of standard conformal prediction with the pre-order $\lesssim$ replacing the complete ordering used in traditional conformal prediction. 

\begin{theorem}
    Suppose $\mathcal{D}_{\mathcal{C}} := \{(X_i,Y_i)\}_{i=1}^{N_{\mathcal{C}}}$ and $(X',Y')$ are exchangeable. Assume further that $K$ maps $s_{k} : \mathcal{X}\times\mathcal{Y}\rightarrow\mathbb{R}$ have been defined and a composite $s(X,Y) := (s_1(X,Y),...,s_{K}(X,Y))$ is defined. Further denote by $\mathcal{S_C}$ the evaluation of $s(X,Y)$ on $\mathcal{D_{C}}$, namely $\mathcal{S_C} := \{s(X_i,Y_i)\mid(X_i,Y_i)\in\mathcal{D}_{\mathcal{C}}\}$. For some $\alpha\in(0,1)$, given a pre-order $\lesssim$ in $\mathbb{R}^K$ induced by a collection of nested sets $\{\mathcal{A}_t\}_{t \ge 0}$ define $\mathcal{Q}(\alpha) = \{s \in \mathbb{R}^K : s \lesssim s_{\ceil{(N_{\mathcal{C}}+1)(1-\alpha)}}\}$. Then, denoting $\mathcal{C}(x) := \{y : s(x, y)\in\mathcal{Q}(\alpha)\}$,
    \begin{equation}
        \mathcal{P}_{X',Y'}(Y'\in\mathcal{C}(X')) \ge 1-\alpha
    \end{equation}
\end{theorem}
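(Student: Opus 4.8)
The plan is to reduce the multivariate statement to ordinary split conformal validity applied to a scalarized score. The key observation is that the pre-order $\lesssim$ induced by the nested family $\{\mathcal{A}_t\}_{t\ge 0}$ is exactly the pullback of the usual order on $\mathbb{R}$ under the map $t(\cdot):\mathbb{R}^K\to\mathbb{R}$ given by $t(s):=\inf\{t\ge 0: s\in\mathcal{A}_t\}$. First I would verify this equivalence: since the $\mathcal{A}_t$ are increasing in $t$, and (assuming each $\mathcal{A}_t$ is closed so the infimum is attained) $s\in\mathcal{A}_{t'}$ for every $t'\ge t(s)$, one checks directly that $s_1\lesssim s_2$ iff $t(s_1)\le t(s_2)$. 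Consequently, membership $s\in\mathcal{Q}(\alpha)=\{s: s\lesssim s_{\ceil{(N_\mathcal{C}+1)(1-\alpha)}}\}$ is equivalent to $t(s)\le t\bigl(s_{\ceil{(N_\mathcal{C}+1)(1-\alpha)}}\bigr)$, and the right-hand side is precisely the $\ceil{(N_\mathcal{C}+1)(1-\alpha)}$-th order statistic of the real numbers $\{t(s(X_i,Y_i))\}_{i=1}^{N_\mathcal{C}}$. Importantly, the theorem treats $\lesssim$ (equivalently, the family $\{\mathcal{A}_t\}$, equivalently $t(\cdot)$) as \emph{fixed}, not data-dependent, which is what lets exchangeability go through.

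Next I would set $Z_i:=t(s(X_i,Y_i))$ for $i=1,\dots,N_\mathcal{C}$ and $Z':=t(s(X',Y'))$. Since $t\circ s$ is a fixed measurable function and $(X_1,Y_1),\dots,(X_{N_\mathcal{C}},Y_{N_\mathcal{C}}),(X',Y')$ are exchangeable, the real-valued random variables $Z_1,\dots,Z_{N_\mathcal{C}},Z'$ are exchangeable as well. By the previous paragraph, $Y'\in\mathcal{C}(X')$ iff $s(X',Y')\in\mathcal{Q}(\alpha)$ iff $Z'\le Z_{(k)}$, where $k:=\ceil{(N_\mathcal{C}+1)(1-\alpha)}$ and $Z_{(k)}$ is the $k$-th smallest of $Z_1,\dots,Z_{N_\mathcal{C}}$, with the convention $Z_{(k)}=+\infty$ when $k=N_\mathcal{C}+1$. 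I would then invoke the standard order-statistic lemma for exchangeable scalars: for any $k\in\{1,\dots,N_\mathcal{C}+1\}$, $\mathcal{P}(Z'\le Z_{(k)})\ge k/(N_\mathcal{C}+1)$. Plugging in $k=\ceil{(N_\mathcal{C}+1)(1-\alpha)}\ge (N_\mathcal{C}+1)(1-\alpha)$ yields $\mathcal{P}(Z'\le Z_{(k)})\ge k/(N_\mathcal{C}+1)\ge 1-\alpha$, which is the claim. The cleanest way to justify that lemma is to note that the rank of $Z'$ among the $N_\mathcal{C}+1$ pooled values is almost uniform on $\{1,\dots,N_\mathcal{C}+1\}$, and that ties can only increase the probability of $\{Z'\le Z_{(k)}\}$ relative to the tie-free case.

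The main obstacle I anticipate is bookkeeping rather than anything analytic: a fully rigorous reduction needs (i) mild regularity of $\{\mathcal{A}_t\}$ — e.g. each $\mathcal{A}_t$ closed and $\bigcup_{t\ge 0}\mathcal{A}_t$ covering the support of $s(X,Y)$ — so that $t(\cdot)$ is finite-valued and the defining infimum is attained, making the pullback characterization of $\lesssim$ exact; and (ii) careful handling of ties, since $\lesssim$ is only a pre-order and several calibration scores (and possibly the test score) may collapse to the same $t$-value. Both are resolved exactly as in the classical split-conformal argument, so beyond these points the proof is a direct transcription of the standard validity proof with $t\circ s$ playing the role of the scalar nonconformity score.
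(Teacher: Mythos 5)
Your proposal is correct and follows essentially the same route as the paper's proof: both reduce the pre-order to the scalar statistic $t(s)=\inf\{t\ge 0: s\in\mathcal{A}_t\}$, note that exchangeability of the data carries over to these scalarized scores, and then apply the standard split-conformal order-statistic argument with $k=\ceil{(N_{\mathcal{C}}+1)(1-\alpha)}$. If anything, you are slightly more careful than the paper, which assumes ties occur with probability zero and takes the equivalence $s_1\lesssim s_2 \iff t(s_1)\le t(s_2)$ for granted, whereas you explicitly flag the tie-handling and the attainment/closedness condition needed for that equivalence.
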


\begin{proof}
    Denote $s_i = s(X_i,Y_i)$ for each $i =1,...,n$ and $s' = s(X',Y')$. We define $t_i = \inf \{t \ge 0: s_i \in \mathcal{A}_t\}$ and $t' = \inf \{t \ge 0: s' \in \mathcal{A}_t\}$. We consider the case that $\mathcal{P}_{X,Y}(t_{i}\neq t_{j}) = 1$, that is, that the probability of ties is a probability measure 0 set. Without loss of generality, we then assume the scores are sorted according to the assumed pre-order, namely that $s_{1}\lesssim s_{2}\lesssim...\lesssim s_{N_{\mathcal{C}}}$, or equivalently $t_{1} \leq t_{2} \leq ... \leq  t_{N_{\mathcal{C}}}$. We then again have that $\widehat{t} = t_{\ceil{(N_{\mathcal{C}}+1)(1-\alpha)}}$ if $\alpha > 1 / (N_{\mathcal{C}}+1)$ and $\widehat{t} = \infty$ otherwise. In the latter case, coverage is trivially satisfied. In the former case, we see
    \begin{equation}
        \mathcal{P}_{X,Y}(Y'\in\mathcal{C}(X'))
        = \mathcal{P}_{X,Y}(s(X, y)\lesssim s_{\ceil{(N_{\mathcal{C}}+1)(1-\alpha)}})
        = \mathcal{P}_{X,Y}(t'\le t_{\ceil{(N_{\mathcal{C}}+1)(1-\alpha)}}).
    \end{equation}
    By the assumed exchangeability of $\mathcal{D}_{\mathcal{C}} := \{(X_i,Y_i)\}_{i=1}^{N_{\mathcal{C}}}$ and $(X',Y')$, we have that
    \begin{equation}
        \mathcal{P}_{X,Y}(t'\le t_{k}) = \frac{k}{N_{\mathcal{C}}+1},
    \end{equation}
    for any $k$. From here, we have the desired conclusion that
    \begin{equation}
        \mathcal{P}_{X,Y}(s(X, y)\lesssim s_{\ceil{(N_{\mathcal{C}}+1)(1-\alpha)}})
        = \left(\frac{1}{N_{\mathcal{C}}+1}\right)(\ceil{(N_{\mathcal{C}}+1)(1-\alpha)})
        \ge 1-\alpha,
    \end{equation}
    completing the proof as desired.
\end{proof}

\section{CSA Predict-Then-Optimize Visual Walkthrough}\label{section:csa_pto_viz}
We now present a visual accompaniment of the predict-then-optimize algorithm presented in \Cref{section:csa_pto}. We once again take $K=2$ for visual clarity in this walkthrough, where the predictors are as discussed in \Cref{section:csa_pto}, namely assumed to be generative predictors $q_k(C\mid X)$ where the number of samples per predictor are fixed to be $\{J_k\}$. For illustration, we assume $J_1=5$ and $J_2=3$, meaning predictions with the first model are made by drawing 5 samples and 3 for the second. We assume the CSA calibration of \Cref{section:csa_envelope} has already been performed, from which a collection of projection directions and quantiles $\{(u_m, \widehat{q}_m)\}_{m=1}^{M}$ are available that implicitly define an acceptance region $\widehat{\mathcal{Q}}$. We further assume the individual predictor score functions are all the GPCP score given in \Cref{eqn:score_gen_pp}, with $d_k$ from \Cref{eqn:score_gen_pp} specifically here taken to simply be the standard Euclidean 2-norm, giving
\begin{equation}
    s_{k}(x,c) = \min_{j\in1,...,J_k} || \widehat{c}_{kj} - c ||.
\end{equation}
We now wish to compute $c^{*} = \max_{\widehat{c} \in \mathcal{C}(x)} f(w, \widehat{c})$. To do so, we must start by defining this region $\mathcal{C}(x)$ for the test point $x$, which we do by drawing the respective number of samples from the two models, producing samples $\{\widehat{c}_{1j}\}_{j=1}^{5}$ and $\{\widehat{c}_{2j}\}_{j=1}^{3}$, as shown in \Cref{fig:pto_sample}.
\begin{figure}[H]
  \centering 
  \includegraphics[width=0.65\textwidth]{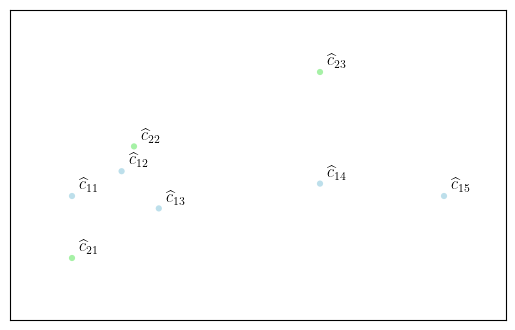}
  \caption{\label{fig:pto_sample} Samples drawn from the two generative models $\{\widehat{c}_{1j}\}_{j=1}^{5}\sim q_1(C\mid x)$ (blue) and $\{\widehat{c}_{2j}\}_{j=1}^{3}\sim q_2(C\mid x)$ (green). Note that this is a visualization in the $\mathcal{C}$ space, i.e. \textit{not} the space of multivariate scores.}
\end{figure}
By definition, $\forall c \in \mathcal{C}(x)$,
\begin{equation}
    u_{m}^{\top}\left(\min_{j_{1}=1,...,5} || \widehat{c}_{1j_{1}} - c ||, \min_{j_{2}=1,...,3} || \widehat{c}_{2j_{2}} - c || \right)\le\widehat{q}_m\qquad\forall m=1,...,M.
\end{equation}
As a result, we must have that, $\forall c \in \mathcal{C}(x)$, $\exists$ $j_{1}=1,...,5$ and $j_{2}=1,...,3$ such that $u_{m}^{\top}\left(|| \widehat{c}_{1j_{1}} - c ||, || \widehat{c}_{2j_{2}} - c || \right)\le\widehat{q}_m$ $\forall m=1,...,M$. Solving for $c^{*}$, therefore, amounts to considering each pair $\vec{j}:=(j_{1},j_{2})\in\mathcal{J}$, where $\mathcal{J} := \{1,...,5\}\times\{1,...,3\}$, and solving
\begin{equation}\label{eqn:constrained_fixed_j}
\begin{aligned}
c_{\vec{j}}^{*} &:= \argmax_{c} f(w, c) \\
\quad \textrm{s.t.} \quad & u_{m}^{\top}\left(|| \widehat{c}_{1 \vec{j}_{1}} - c ||, || \widehat{c}_{2 \vec{j}_{2}} - c || \right)\le\widehat{q}_m \quad\forall m\in\{1,...,M\}
\end{aligned}
\end{equation}
Notice that, for any fixed $\vec{j}$, this is a standard convex optimization problem with a convex feasible region. We illustrate how the feasible region would be constructed for a fixed $\vec{j}$ in \Cref{fig:pto_distances}. Note that the construction of this feasible is never explicitly done in practice and is only implicitly used by convex solver routines in practice. We can, therefore, then solve \Cref{eqn:constrained_fixed_j} over all possible $\vec{j}\in\mathcal{J}$ and aggregate the maxima to compute $c^{*}$.
\begin{figure}[H]
    \centering
    \begin{subfigure}[b]{0.425\linewidth}
      \centering
      \includegraphics[width=\linewidth]{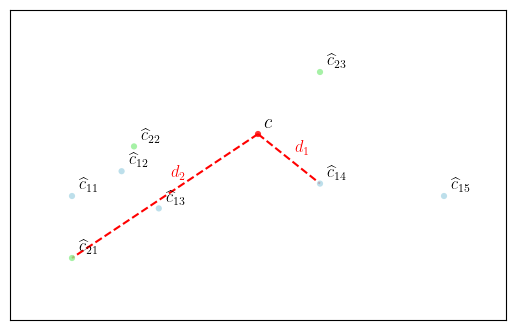}
      \label{fig:sub1}
    \end{subfigure}%
    \hfill
    \begin{subfigure}[b]{0.425\linewidth}
      \centering
      \includegraphics[width=\linewidth]{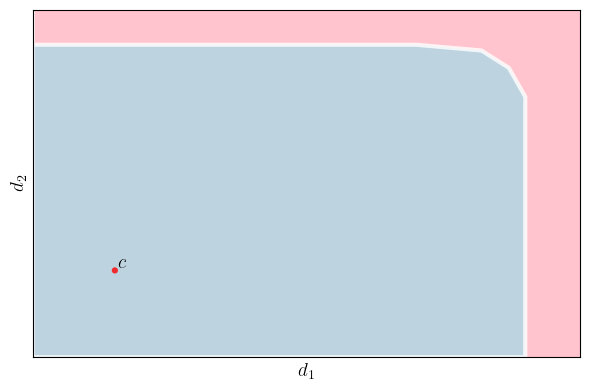}
      \label{fig:sub2}
    \end{subfigure}
    \caption{A candidate $c\in\mathcal{C}$ is in prediction region if the projections of its distances $(d_1,d_2)$ from \textit{at least one} pair of points indexed by $\vec{j} := (j_1, j_2)$ is in the acceptance region $\widehat{\mathcal{Q}}$. Here, we illustrate a point $c$ that lies in the feasible region from its proximity to the $\vec{j} := (4,1)$ pair of points. Note that the left is again a visualization over the $\mathcal{C}$ space, whereas the right is of the \textit{score} space.}
    \label{fig:pto_distances} 
\end{figure}

\subsection{CSA Predict-Then-Optimize Algorithm}\label{section:pto_alg}
We provide the condensed presentation of the predict-then-optimize algorithm below.

\begin{algorithm}
  \caption{\label{alg:CSA_pred_region} Predict-Then-Optimize Under CSA
  }
  \begin{algorithmic}[1]
    \STATE \textbf{Inputs: } Context $x$, Predictors $\{q_{k}(C\mid X)\}_{k=1}^{K}$, Optimization steps $T$, Sample counts $\{J_{k}\}_{k=1}^{K}$, CSA quantile $\{(u_{m}, \widehat{q}_{m})\}_{m=1}^{M}$
    \STATE $\{\{\widehat{c}_{kj}\}_{j=1}^{J_k}\sim q_{k}(C\mid X)\}_{k=1}^{K}, \mathcal{J} = \prod_{k=1}^{K} [J_{k}]$
    \STATE $w^{(0)} \sim U(\mathcal{W})$
    \FOR{$t \in\{1,\ldots T\}$}
    \STATE \textbf{for} $\vec{j}\in\mathcal{J}$ \textbf{do} $c_{\vec{j}}^{*} \gets \argmax_{c} f(w^{(t)}, c)\qquad\mathrm{s.t.}\quad \forall m\in1,...,M\qquad u_{m}^{\top} s(\widehat{c}_{\vec{j}}, c) \le \widehat{q}_{m}$
    \STATE $c^{*} \gets \argmax_{c_{\vec{j}}^{*}} f(w^{(t)}, c_{\vec{j}}^{*})$
    \STATE $w^{(t)} \gets \Pi_{\mathcal{W}} (w^{(t-1)} - \eta \nabla_{w} f(w^{(t-1)}, c^{*}))$
    \ENDFOR
    \STATE{\textbf{Return} $w^{(T)}$}
    \end{algorithmic}
\end{algorithm}

\section{Computational Efficiency}\label{section:comp_efficiency}
\subsection{Vectorized Score Computation}
We now discuss the vectorized form of the computations discussed in \Cref{section:quantile_env}. In particular, putting the scores into a matrix $S_C^{(1)} = [s_1, ..., s_{N_{C_1}}]^{\top}\in\mathbb{R}^{N_{C_1}\times K}$ and directions into a matrix $U = [u_1, ..., u_{M}]^{\top}\in\mathbb{R}^{K\times M}$, all the projections $u_{m}^{\top}s_{i}$ can be computed as $S_C^{(1)} U\in\mathbb{R}^{N_{C_1}\times M}$, where $[S_C^{(1)} U]_ {i,m}$ is precisely $u_{m}^{\top}s_{i}$. $\tilde{q}\in\mathbb{R}^{M} := \{\tilde{q}_{m}:= \mathrm{quantile}([S_C^{(1)} U]_{:,m}; 1-\alpha)\}$ is then the quantile per row.

For any test point $s’\in\mathbb{R}^{K}$, we can then very efficiently check if it falls into the region by checking if it satisfies $Us’ \le \tilde{q}$ component-wise. Each iteration of the loop to find $\beta^{*}$, therefore, is very fast, and we find the search typically converges in 5-10 iterations. 

The final step is computing $\widehat{t}$. Computing this follows similarly to above, where we take the scores $S_C^{(2)}$, compute projections $S_C^{(2)} U\in\mathbb{R}^{N_{C_2}\times M}$, find $\widetilde{T} := S_C^{(2)} U / \tilde{q}\in\mathbb{R}^{N_{C_2}\times M}$, where division is interpreted as being defined component-wise along the rows, computing the maxima similarly along the rows $[T^{*}]_{i} := \max \widetilde{T}_{i,:}$, and finally computing $\widehat{t}$ as the $1-\alpha$ quantile of $T^{*}$. 

\newpage
\subsection{Empirical Efficiency Validation}
To demonstrate the computational efficiency of this vectorized approach, we reran the experiment on the ``Parkinsons'' UCI task with both varying numbers of predictors ($K$) and projection directions ($M$); for each combination, we measured the total time taken to compute the quantile (i.e. to run Algorithm 1) and to perform the projection to assess coverage for the test points. The additional predictors were taken to be random forests with different numbers of trees. $K$ is given in the left column and $M$ in each column heading, with the entry for each $(K,M)$ pair being reported in seconds. As expected, by the vectorized nature of the computations, as discussed in the main paper, the performance scales gracefully over $M$ at roughly $\mathcal{O}(M)$ and remains roughly constant in $K$.

\begin{table}[h]
\centering
\caption{\label{table:comp_efficiency} Performance values for varying $K$ (number of predictors) and $M$ (number of projection directions). All values are reported in seconds.}
\resizebox{0.4\textwidth}{!}{%
\begin{tabular}{r|rrrr}
\diagbox[width=4em]{$K$}{$M$} & \textbf{10} & \textbf{100} & \textbf{1000} & \textbf{10000} \\
\hline
\textbf{6}  & 0.111668  & 0.373029  & 2.32803  & 37.2561 \\
\textbf{8}  & 0.0961056 & 0.327051  & 2.16211  & 36.7216 \\
\textbf{10} & 0.117146  & 0.373464  & 2.73875  & 37.2603 \\
\textbf{12} & 0.123772  & 0.384527  & 2.35386  & 37.0735 \\
\end{tabular}
}
\end{table}

\section{Prediction Region Intuition}\label{section_pred_region_intuition}
We now consider a simplified setting of the general procedure to gain insight into the efficiency resulting prediction regions. In particular, we consider a scalar regression setting with $K$ predictors $f_{1}, ..., f_{K} : \mathcal{X}\rightarrow\mathbb{R}$. We further suppose, with a slight abuse of notation, $f(x) - y\sim\mathcal{N}(0, \Sigma)$ for $f(x) := [f_1(x),...,f_K(x)]$ and that the scores are taken to be $s_k(x, y) := (f_k(x) - y)^{2}$. Then, as $\widehat{\mathcal{Q}}$ is precisely the region with minimal volume that captures $1-\alpha$ density, it is precisely given by $\chi_{K,1-\alpha}^{2}$, the $1-\alpha$ quantile of the $\chi^2$ distribution with $K$ degrees of freedom. That is, the prediction regions are $\mathcal{C}^{\mathrm{CSA}}(x) := \{ y : (f(x) - y)^{\top} \Sigma^{-1} (f(x) - y) \le \chi_{K,1-\alpha}^{2} \}$. Notably, if $\Sigma = \mathrm{diag}(\{\sigma_i^{2}\})$,
\begin{equation}
    \mathcal{C}^{\mathrm{CSA-diag}}(x) := \left\{ y : \sum^{K}_{i=1} \left(\frac{f_{k}(x) - y}{\sigma_{k}}\right)^{2} \le \chi_{K,1-\alpha}^{2} \right\}.
\end{equation}
Two insights arise from this expression. The first is that, unlike in the prediction regions for the case of the univariate score function $s_k(x, y) := (f_k(x) - y)^{2}$, the size $|\mathcal{C}^{\mathrm{CSA-diag}}(x)|$ is \textit{dependent} on $x$. In the univariate case, the size is $2 \widehat{q}_k$ across all $x$. Here, however, the feasible set of $y$ becomes smaller the more distinct the values $f_{k}(x)$ are. The second insight, therefore, is that, under such independence of residuals, prediction region sizes are minimized in having well-separated predictions, which suggests that efficiency is optimized by having an ensemble of predictors that learn distinct maps from $\mathcal{X}\to\mathcal{Y}$, such as those that focus on distinct views of the covariate space.

\section{Full ImageNet Results}\label{section:full_imgnet_results}
Here we present the complete collection of results for the classification task across additional $\alpha$ than those that could fit in the main paper.

\begin{table}[H]
\caption{\label{table:full_imgnet_results}  Average coverages across different coverage levels are shown in the top rows and average prediction set sizes in the bottom rows. Both were assessed over a batch of i.i.d. test samples (15\% of the validation set from ImageNet). Standard deviations and means were computed across 10 randomized draws of the calibration and test sets. }
\resizebox{\textwidth}{!}{%

\begin{tabular}{cccccccccccc}
\toprule
\textbf{Dataset/$\alpha$}  & \textbf{Metrics} & \textbf{ResNet} & \textbf{VGG} & \textbf{DenseNet} & \textbf{VFCP} & \textbf{$\mathcal{C}^{M}$} & \textbf{$\mathcal{C}^{R}$} & \textbf{$\mathcal{C}^{U}$} & \textbf{Ensemble} & \textbf{\textit{CSA (Single-Stage)}} & \textbf{CSA} \\
\midrule
ImageNet & Coverage & 0.97 (0.011) & 0.966 (0.011) & 0.939 (0.002) & 0.954 (0.006) & 0.956 (0.011) & 0.948 (0.01) & 0.96 (0.013) & 0.95 (0.006) & \textit{0.955 (0.013)} & 0.957 (0.01)            \\
$(\alpha=0.07)$ & Length & 174.828 (2.037) &  181.58 (2.459) &  160.623 (2.828) &  61.247 (1.638) &  122.899 (2.154) &  111.602 (2.155) &  173.264 (2.674) &  86.955 (1.886) &  44.787 (1.198) &  \textbf{45.352 (1.56)} \\
\midrule
 & Coverage & 0.95 (0.003) &    0.949 (0.004) &    0.952 (0.002) &    0.95 (0.003) &    0.975 (0.002) &    0.954 (0.004) &     0.95 (0.003) &    0.949 (0.002) &    0.95 (0.002) &    0.95 (0.003) \\
$(\alpha=0.05)$ & Length & 220.022 (2.072) &  229.523 (3.076) &  208.658 (2.016) &  78.108 (2.004) &  166.933 (2.157) &  143.323 (2.932) &  220.491 (2.773) &  112.161 (2.115) &  58.424 (1.674) &  \textbf{59.574 (3.382)} \\
\midrule
 & Coverage & 0.98 (0.002) &    0.981 (0.002) &     0.98 (0.002) &     0.98 (0.002) &    0.992 (0.002) &    0.982 (0.002) &     0.98 (0.002) &     0.98 (0.003) &     0.98 (0.001) &     0.979 (0.002)\\
$(\alpha=0.02)$ & Length & 357.487 (5.046) &  363.376 (4.916) &  342.479 (5.603) &  \textbf{137.356 (3.595)} &  311.521 (4.053) &  327.754 (6.251) &  355.701 (4.423) &  202.573 (2.948) &  117.272 (5.015) &  \textbf{121.477 (19.719)}\\
\midrule
 & Coverage & 0.99 (0.001) &     0.991 (0.001) &    0.989 (0.002) &     0.99 (0.001) &    0.997 (0.001) &     0.991 (0.002) &    0.99 (0.002) &     0.99 (0.002) &     0.99 (0.001) &     0.99 (0.002)\\
$(\alpha=0.01)$ & Length & 491.952 (6.353) &  726.028 (12.157) &  459.399 (6.739) &  \textbf{194.691 (4.579)} &  580.592 (7.715) &  532.155 (24.829) &  559.188 (7.07) &  299.453 (6.526) &  180.534 (8.468) &  \textbf{201.32 (46.509)} \\
\bottomrule
\end{tabular}
}
\end{table}

\newpage
\section{Full OpenML Results}\label{section:full_openml_results}

\begin{table}[H]
\caption{\label{table:openml_full_results_05} Average coverages across tasks for $\alpha=0.05$ are shown in the top row and average prediction set lengths in the bottom row, where both were assessed over a batch of i.i.d. test samples (20\% of the dataset size). Standard deviations and means were computed across 5 randomizations of draws of the training, calibration, and test sets. In cases where the method failed to achieve sufficient coverage (defined as $< 0.93$), we do not include it in comparison for set length. Similarly, the single-stage approach fails to achieve coverage due to lack of exchangeability with test points.}
\resizebox{\textwidth}{!}{%
\begin{tabular}{cccccccccccc}
\toprule
Dataset & Metrics & Linear Model & LASSO & Random Forest & XGBoost & $\mathcal{C}^{M}$ & $\mathcal{C}^{R}$ & $\mathcal{C}^{U}$ & Ensemble & \textit{CSA (Single-Stage)} & CSA \\\midrule
361234 & Coverage & 0.97 (0.011) & 0.966 (0.011) & 0.939 (0.002) & 0.954 (0.006) & 0.956 (0.011) & 0.948 (0.01) & 0.96 (0.013) & 0.95 (0.006) & \textit{0.955 (0.013)} & 0.957 (0.01)            \\
& Length & 9.673 (0.160) & 9.645 (0.154) & 10.080 (0.160) & 9.157 (0.052) & 9.196 (0.123) & 8.703 (0.086) & 9.524 (0.056) & 17.759 (0.275) & \textit{7.646 (0.073)} & \textbf{7.688 (0.181)} \\
\midrule
361235 & Coverage & 0.947 (0.0) & 0.945 (0.005) & 0.968 (0.016) & 0.95 (0.005) & 0.955 (0.016) & 0.897 (0.005) & 0.953 (0.011) & 0.932 (0.021) & \textit{0.745 (0.011)} & 0.984 (0.005)            \\
& Length & 20.961 (0.651) & 24.241 (0.246) & \textbf{10.096 (0.587)} & 11.387 (0.452) & 11.782 (0.057) & --- & 16.088 (0.118) & 15.823 (1.272) & \textit{6.162 (0.458)} & 11.695 (0.266) \\
\midrule
361236 & Coverage & 0.975 (0.008) & 0.975 (0.008) & 0.961 (0.0) & 0.948 (0.012) & 0.948 (0.012) & 0.938 (0.012) & 0.965 (0.008) & 0.934 (0.004) & \textit{0.94 (0.004)} & 0.963 (0.004)                  \\
& Length & 44407.071 (1173.758) & 44509.269 (1229.817) & 50820.568 (385.951) & 41045.069 (1221.808) & 43185.942 (1002.516) & 40905.295 (1089.411) & 44437.938 (851.862) & 60509.250 (2410.320) & \textit{30953.589 (2482.065)} & \textbf{33439.322 (1275.213)} \\
\midrule
361237 & Coverage & 0.969 (0.023) & 0.969 (0.023) & 0.981 (0.0) & 0.923 (0.0) & 0.954 (0.015) & 0.9 (0.008) & 0.969 (0.023) & 0.885 (0.038) & \textit{0.8 (0.015)} & 0.977 (0.008)            \\
& Length & 44.019 (0.990) & 44.069 (1.115) & 27.035 (1.014) & --- & 26.524 (1.244) & --- & 31.967 (1.118) & --- & \textit{14.473 (0.503)} & \textbf{23.145 (0.199)} \\
\midrule
361241 & Coverage & 0.954 (0.001) & 0.956 (0.001) & 0.944 (0.005) & 0.957 (0.002) & 0.954 (0.002) & 0.923 (0.0) & 0.952 (0.0) & 0.949 (0.001) & \textit{0.917 (0.006)} & 0.951 (0.001)            \\
& Length & 19.133 (0.062) & 20.245 (0.095) & 18.102 (0.055) & 18.482 (0.062) & 17.958 (0.062) & --- & 18.932 (0.034) & 29.548 (0.191) & \textit{15.199 (0.427)} & \textbf{17.328 (0.097)} \\
\midrule
361242 & Coverage & 0.944 (0.004) & 0.955 (0.0) & 0.947 (0.004) & 0.942 (0.0) & 0.948 (0.0) & 0.914 (0.003) & 0.944 (0.001) & 0.949 (0.003) & \textit{0.9 (0.006)} & 0.944 (0.002)           \\
& Length & 70.248 (0.304) & 84.510 (0.282) & \textbf{50.442 (0.421)} & 54.844 (0.036) & 54.217 (0.115) & --- & 65.635 (0.094) & 61.613 (0.372) & \textit{44.602 (0.170)} & 57.935 (0.070) \\
\midrule
361243 & Coverage & 0.922 (0.03) & 0.952 (0.015) & 0.956 (0.022) & 0.952 (0.015) & 0.937 (0.022) & 0.893 (0.044) & 0.937 (0.022) & 0.919 (0.022) & \textit{0.748 (0.126)} & 0.956 (0.022)           \\
& Length & --- & 71.388 (0.152) & 75.924 (2.291) & 72.877 (0.729) & \textbf{68.493 (0.993)} & --- & 72.048 (0.024) & --- & \textit{43.742 (10.285)} & \textbf{68.220 (1.605)} \\
\midrule
361244 & Coverage & 0.97 (0.022) & 0.97 (0.022) & 0.97 (0.022) & 0.97 (0.022) & 0.97 (0.022) & 0.97 (0.022) & 0.97 (0.022) & 0.974 (0.015) & \textit{0.963 (0.037)} & 0.956 (0.015)           \\
& Length & 3.274 (0.004) & 3.274 (0.004) & 3.336 (0.023) & 3.284 (0.010) & 3.272 (0.000) & 3.269 (0.003) & 3.289 (0.002) & 4.854 (0.293) & \textit{0.287 (0.008)} & \textbf{0.287 (0.008)} \\
\midrule
361247 & Coverage & 0.96 (0.001) & 0.953 (0.003) & 0.94 (0.001) & 0.951 (0.003) & 0.963 (0.001) & 0.903 (0.007) & 0.951 (0.0) & 0.954 (0.001) & \textit{0.843 (0.003)} & 0.943 (0.006)         \\
& Length & 0.025 (0.000) & 0.038 (0.000) & \textbf{0.006 (0.000)} & 0.016 (0.000) & 0.015 (0.000) & --- & 0.022 (0.000) & 0.013 (0.000) & \textit{0.005 (0.000)} & \textbf{0.008 (0.000)} \\
\midrule
361249 & Coverage & 0.96 (0.002) & 0.956 (0.002) & 0.962 (0.003) & 0.972 (0.007) & 0.953 (0.005) & 0.938 (0.002) & 0.965 (0.005) & 0.936 (0.01) & \textit{0.931 (0.0)} & 0.953 (0.005)           \\
& Length & 3.008 (0.006) & 3.068 (0.009) & 2.800 (0.000) & 2.780 (0.025) & 2.775 (0.006) & \textbf{2.558 (0.019)} & 2.894 (0.000) & 4.706 (0.099) & \textit{2.216 (0.020)} & \textbf{2.614 (0.043)} \\
\bottomrule
\end{tabular}
}
\end{table}

\begin{table}[H]
\caption{\label{table:openml_full_results_025} Average coverages across tasks for $\alpha=0.025$ are shown in the top row and average prediction set lengths in the bottom row, where both were assessed over a batch of i.i.d. test samples (20\% of the dataset size). Standard deviations and means were computed across 5 randomizations of draws of the training, calibration, and test sets. In cases where the method failed to achieve sufficient coverage (defined as $< 0.96$), we do not include it in comparison for set length. Similarly, the single-stage approach fails to achieve coverage due to lack of exchangeability with test points.}
\resizebox{\textwidth}{!}{%
\begin{tabular}{cccccccccccc}
\toprule
Dataset & Metrics & Linear Model & LASSO & Random Forest & XGBoost & $\mathcal{C}^{M}$ & $\mathcal{C}^{R}$ & $\mathcal{C}^{U}$ & Ensemble & \textit{CSA (Single-Stage)} & CSA \\\midrule
361234 & Coverage & 0.987 (0.008) & 0.987 (0.008) & 0.974 (0.004) & 0.977 (0.008) & 0.982 (0.008) & 0.971 (0.01) & 0.981 (0.01) & 0.97 (0.008) & \textit{0.976 (0.01)} & 0.973 (0.006)           \\
& Length & 11.939 (0.137) & 11.871 (0.084) & 12.484 (0.168) & 11.972 (0.009) & 11.587 (0.110) & 11.157 (0.086) & 11.965 (0.050) & 25.598 (0.974) & \textit{9.306 (0.259)} & \textbf{8.855 (0.059)} \\
\midrule
361235 & Coverage & 0.987 (0.0) & 0.982 (0.011) & 0.979 (0.011) & 0.984 (0.005) & 0.989 (0.005) & 0.966 (0.016) & 0.976 (0.005) & 0.958 (0.021) & \textit{0.889 (0.011)} & 0.989 (0.005)            \\
& Length & 24.595 (0.825) & 28.841 (1.129) & \textbf{11.811 (0.992)} & 14.237 (0.786) & 14.472 (0.172) & \textbf{12.278 (0.026)} & 19.231 (0.356) & --- & \textit{7.719 (0.467)} & \textbf{12.563 (0.766)} \\
\midrule
361236 & Coverage & 0.992 (0.004) & 0.992 (0.004) & 0.981 (0.0) & 0.965 (0.008) & 0.975 (0.008) & 0.965 (0.008) & 0.977 (0.012) & 0.955 (0.008) & \textit{0.955 (0.012)} & 0.973 (0.004)                  \\
& Length & 48591.496 (873.946) & 48578.821 (867.718) & 56132.760 (368.832) & 46623.663 (1565.744) & 47630.890 (810.373) & 45714.911 (1062.420) & 49188.464 (753.205) & --- & \textit{32881.580 (3110.734)} & \textbf{35777.096 (1949.538)} \\
\midrule
361237 & Coverage & 0.981 (0.0) & 0.981 (0.0) & 0.981 (0.0) & 0.977 (0.008) & 0.962 (0.0) & 0.962 (0.0) & 0.977 (0.008) & 0.965 (0.008) & \textit{0.927 (0.031)} & 0.981 (0.0)              \\
& Length & 47.738 (0.542) & 47.440 (0.959) & 30.785 (0.037) & \textbf{26.208 (0.897)} & 30.554 (0.561) & \textbf{27.182 (0.803)} & 35.982 (0.619) & 67.660 (6.380) & \textit{18.214 (0.436)} & \textbf{26.897 (0.515)} \\
\midrule
361241 & Coverage & 0.979 (0.001) & 0.978 (0.001) & 0.976 (0.001) & 0.978 (0.001) & 0.978 (0.0) & 0.964 (0.002) & 0.977 (0.0) & 0.972 (0.002) & \textit{0.958 (0.003)} & 0.979 (0.0)              \\
& Length & 21.772 (0.085) & 23.089 (0.106) & 21.543 (0.009) & 21.454 (0.109) & 20.862 (0.088) & \textbf{19.291 (0.060)} & 21.905 (0.041) & 40.082 (0.045) & \textit{17.765 (0.329)} & 19.897 (0.062) \\
\midrule
361242 & Coverage & 0.977 (0.003) & 0.978 (0.001) & 0.975 (0.002) & 0.968 (0.003) & 0.973 (0.004) & 0.955 (0.002) & 0.971 (0.002) & 0.975 (0.0) & \textit{0.936 (0.001)} & 0.977 (0.001)            \\
& Length & 83.892 (0.143) & 99.811 (0.866) & \textbf{65.672 (0.212)} & 68.119 (0.187) & 68.155 (0.357) & --- & 80.032 (0.354) & 85.678 (0.371) & \textit{53.549 (0.585)} & 69.388 (0.128) \\
\midrule
361243 & Coverage & 0.985 (0.007) & 0.985 (0.007) & 0.985 (0.007) & 0.956 (0.022) & 0.985 (0.007) & 0.97 (0.015) & 0.985 (0.007) & 0.978 (0.007) & \textit{0.748 (0.126)} & 0.985 (0.007)            \\
& Length & 92.698 (1.567) & 87.949 (0.147) & 88.993 (2.345) & --- & 84.569 (0.557) & \textbf{79.879 (0.739)} & 87.950 (0.308) & 137.673 (15.214) & \textit{46.504 (12.957)} & \textbf{79.976 (2.585)} \\
\midrule
361244 & Coverage & 0.974 (0.015) & 0.974 (0.015) & 0.974 (0.015) & 0.974 (0.015) & 0.974 (0.015) & 0.974 (0.015) & 0.974 (0.015) & 0.989 (0.022) & \textit{0.963 (0.037)} & 0.974 (0.015)           \\
& Length & 5.274 (0.004) & 5.274 (0.004) & 5.336 (0.023) & 5.284 (0.010) & 5.272 (0.000) & 5.269 (0.003) & 5.289 (0.002) & 11.283 (1.039) & \textit{0.287 (0.008)} & \textbf{0.287 (0.008)} \\
\midrule
361247 & Coverage & 0.98 (0.0) & 0.976 (0.003) & 0.969 (0.004) & 0.974 (0.001) & 0.977 (0.003) & 0.945 (0.004) & 0.971 (0.003) & 0.982 (0.001) & \textit{0.906 (0.003)} & 0.974 (0.004)         \\
& Length & 0.029 (0.000) & 0.042 (0.000) & \textbf{0.009 (0.000)} & 0.019 (0.000) & 0.018 (0.000) & --- & 0.025 (0.000) & 0.016 (0.000) & \textit{0.008 (0.000)} & 0.012 (0.000) \\
\midrule
361249 & Coverage & 0.981 (0.003) & 0.981 (0.003) & 0.984 (0.002) & 0.991 (0.002) & 0.981 (0.003) & 0.976 (0.002) & 0.981 (0.003) & 0.971 (0.007) & \textit{0.962 (0.011)} & 0.977 (0.003)           \\
& Length & 3.645 (0.023) & 3.674 (0.026) & 3.600 (0.000) & 3.322 (0.019) & 3.402 (0.005) & 3.201 (0.019) & 3.543 (0.000) & 6.533 (0.252) & \textit{2.719 (0.164)} & \textbf{2.972 (0.064)} \\
\bottomrule
\end{tabular}
}
\end{table}

\newpage
\section{CSA Prediction Region Visualizations}\label{section:pred_region_viz}
We now visualize some of the prediction regions corresponding to some of the trials run in \Cref{section:full_openml_results}. While we find these intervals to be connected across these tasks, we expect visualizations over multivariate output spaces, i.e. for 2D regression problems, would reveal sets to be non-connected.

\begin{figure}[H]
    \centering
    \begin{subfigure}[b]{0.45\linewidth}
      \centering
      \includegraphics[width=\linewidth]{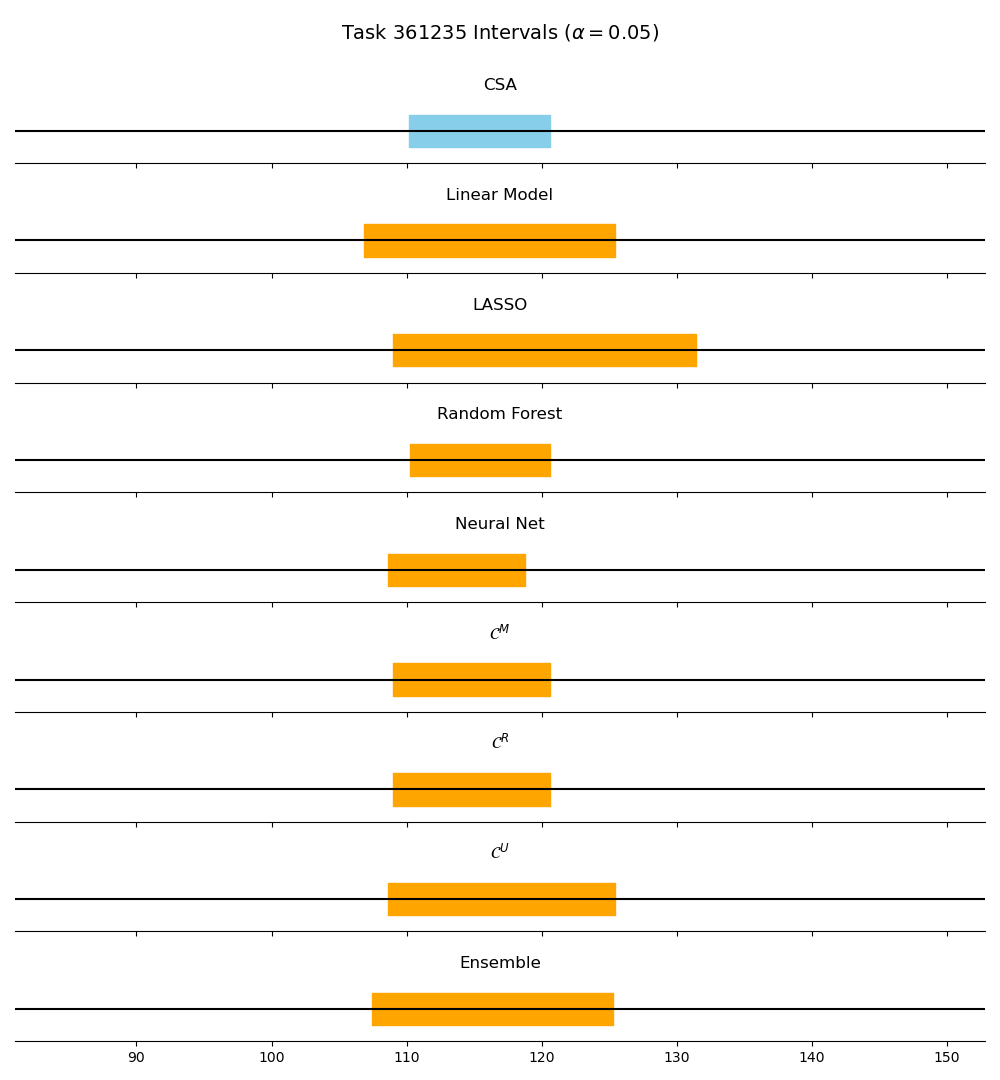}
      \label{fig:task_1_0.05}
    \end{subfigure}%
    \hfill
    \begin{subfigure}[b]{0.45\linewidth}
      \centering
      \includegraphics[width=\linewidth]{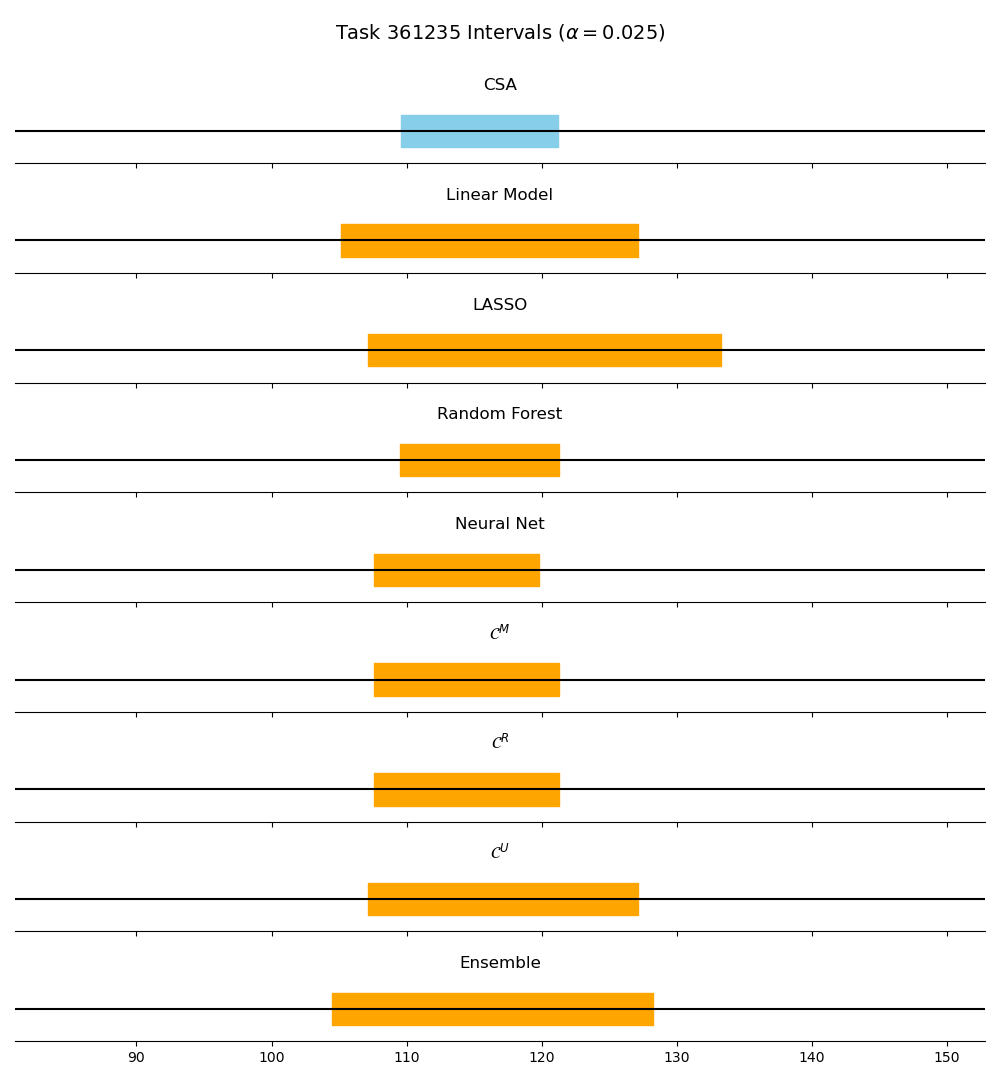}
      \label{fig:task_1_0.025}
    \end{subfigure}
    \caption{Prediction regions across methods for task 361235 for $\alpha=0.05$ (left) and $0.025$ (right).}
    \label{fig:viz_1_intervals} 
\end{figure}

\begin{figure}[H]
    \centering
    \begin{subfigure}[b]{0.45\linewidth}
      \centering
      \includegraphics[width=\linewidth]{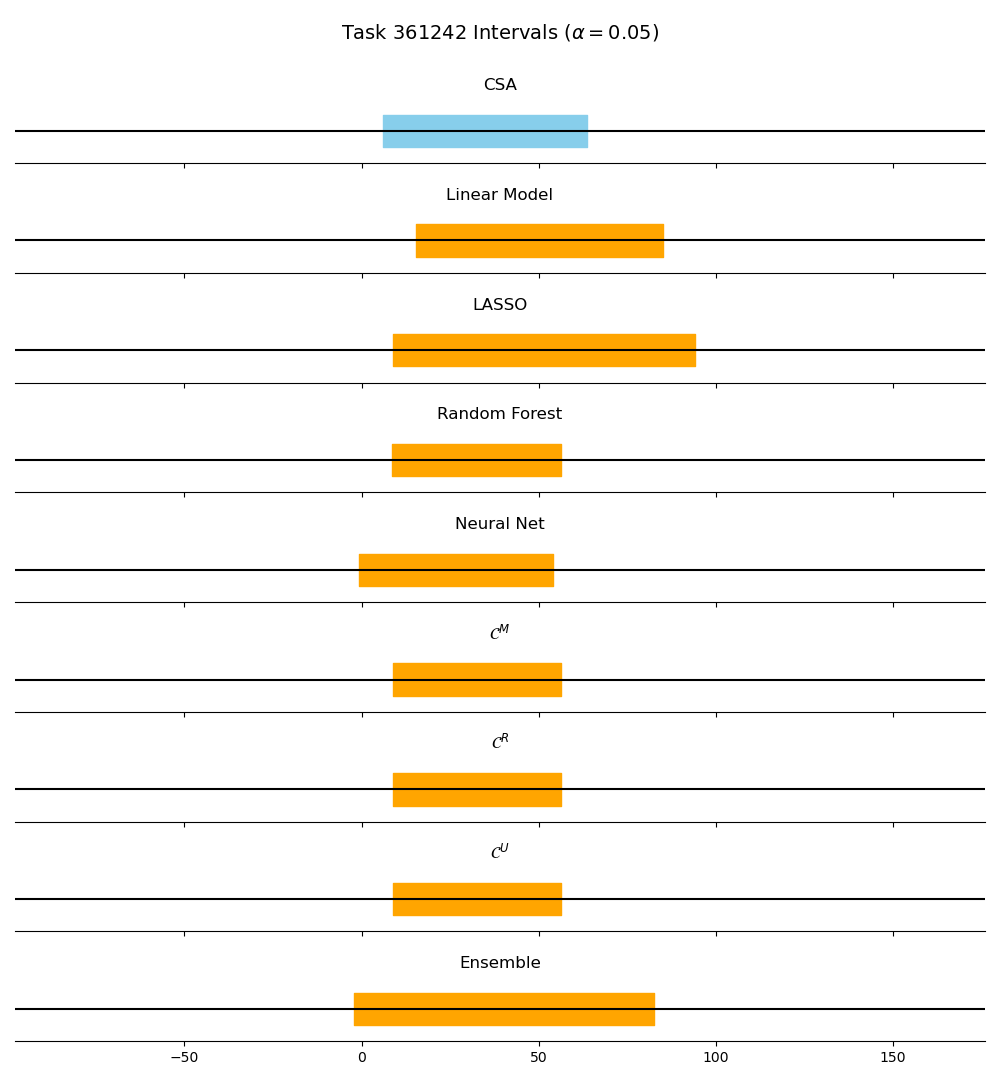}
      \label{fig:task_5_0.05}
    \end{subfigure}%
    \hfill
    \begin{subfigure}[b]{0.45\linewidth}
      \centering
      \includegraphics[width=\linewidth]{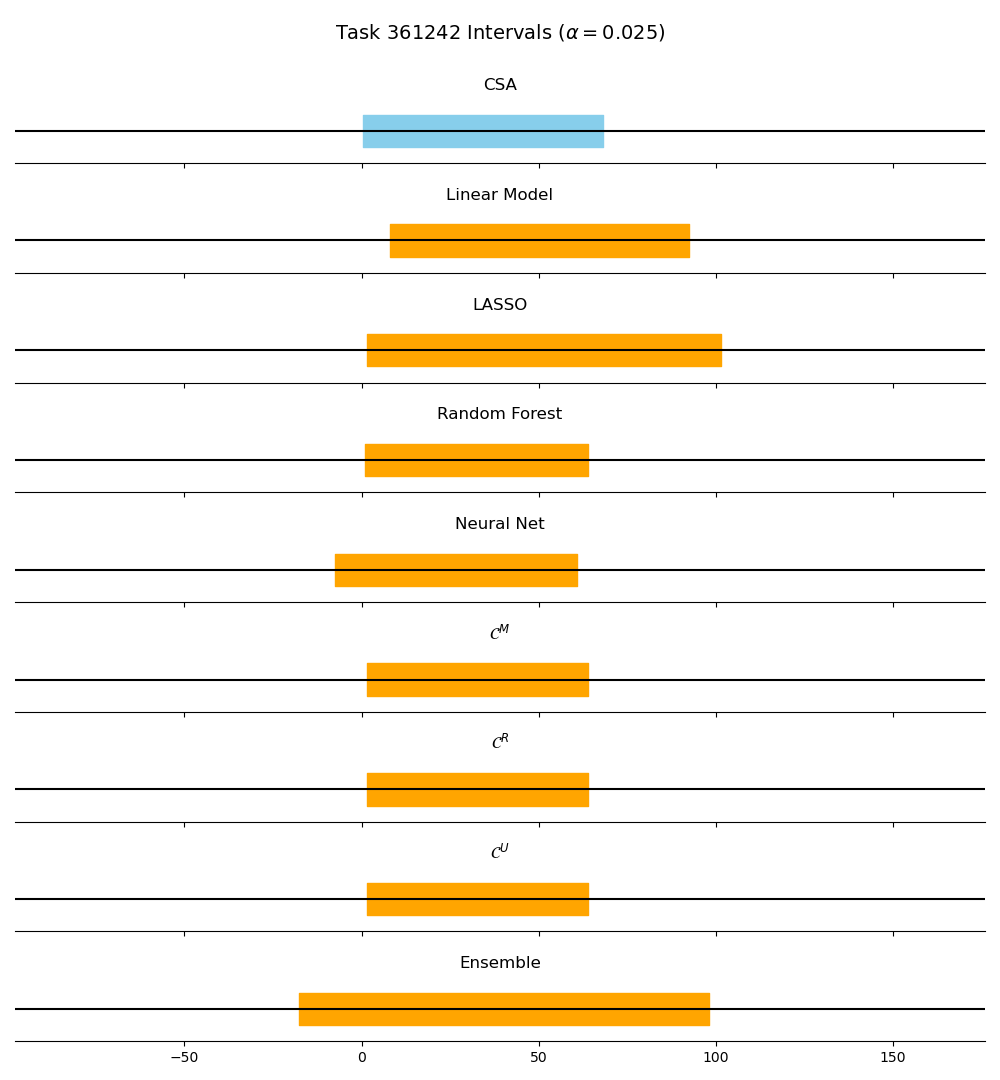}
      \label{fig:task_5_0.025}
    \end{subfigure}
    \caption{Prediction regions across methods for task 361242 for $\alpha=0.05$ (left) and $0.025$ (right).}
    \label{fig:viz_5_intervals} 
\end{figure}

\section{UCI Results}\label{section:uci_supplement}
We consider those regression tasks from the UCI repository \cite{asuncion2007uci} that have at least 1,000 samples. The complete collection of results is presented in \Cref{table:uci_full_results}. As discussed in the main text, across nearly all the UCI benchmark tasks, we find that the conformalized random forest does optimally and that ensembling methods provide no further benefit over simply taking this single predictor. In this degenerate case, we would expect an optimal aggregator to simply then return this optimal single predictor. We then find CSA to consistently significantly outperform other aggregation strategies and to return a prediction region of size comparable to that of the nominal random forest.




\begin{table}[H]
\caption{\label{table:uci_full_results} Average coverages across tasks for $\alpha=0.05$ are shown in the top row and average prediction set lengths in the bottom row, where both were assessed over a batch of i.i.d. test samples (10\% of the dataset size). We are highlighting the robustness compared to other aggregation strategies here and so bold the best performing amongst the aggregation methods. Standard deviations and means were computed across 5 randomizations of draws of the training, calibration, and test sets. Note that, while the single-stage prediction regions are the smallest, they fail to achieve the desired coverage level and are, therefore, precluded from comparison.}
\resizebox{\textwidth}{!}{%
\begin{tabular}{cccccccccccc}
\toprule
Dataset & Metrics & Linear Model & LASSO & Random Forest & XGBoost & $\mathcal{C}^{M}$ & $\mathcal{C}^{R}$ & $\mathcal{C}^{U}$ & Ensemble & \textit{CSA (Single-Stage)} & CSA \\\midrule
airfoil & Coverage & 0.966 (0.011) & 0.926 (0.011) & 0.934 (0.026) & 0.966 (0.011) & 0.945 (0.021) & 0.916 (0.016) & 0.934 (0.026) & 0.958 (0.032) & \textit{0.805 (0.058)} & 0.953 (0.011)            \\
& Length & 19.062 (0.615) & --- & 11.368 (0.188) & 11.770 (0.261) & \textbf{12.371 (0.131)} & --- & 16.227 (0.025) & 16.097 (1.052) & \textit{8.609 (0.719)} & 14.075 (0.734) \\
\midrule
bike & Coverage & 0.944 (0.007) & 0.947 (0.004) & 0.954 (0.002) & 0.958 (0.003) & 0.938 (0.006) & 0.908 (0.0) & 0.946 (0.003) & 0.955 (0.0) & \textit{0.963 (0.007)} & 0.947 (0.003)           \\
& Length & 3.178 (0.011) & 3.343 (0.021) & 0.065 (0.000) & 0.111 (0.000) & \textbf{0.111 (0.001)} & --- & 1.722 (0.007) & 0.682 (0.004) & \textit{0.156 (0.019)} & 0.134 (0.007) \\
\midrule
concrete & Coverage & 0.977 (0.008) & 0.977 (0.008) & 0.981 (0.0) & 0.915 (0.023) & 0.962 (0.0) & 0.915 (0.023) & 0.981 (0.0) & 0.915 (0.023) & \textit{0.854 (0.054)} & 0.977 (0.008)            \\
& Length & 44.295 (0.424) & 44.470 (0.326) & 26.053 (2.114) & --- & \textbf{26.488 (1.141)} & --- & 32.455 (1.165) & --- & \textit{19.712 (6.592)} & \textbf{25.302 (3.244)} \\
\midrule
kin40k & Coverage & 0.949 (0.002) & 0.949 (0.001) & 0.946 (0.002) & 0.948 (0.003) & 0.945 (0.002) & 0.919 (0.002) & 0.949 (0.0) & 0.945 (0.002) & \textit{0.907 (0.004)} & 0.941 (0.006)           \\
& Length & 3.781 (0.017) & 3.781 (0.016) & 2.333 (0.026) & 3.343 (0.026) & 3.269 (0.018) & --- & 3.291 (0.009) & 4.985 (0.003) & \textit{2.138 (0.001)} & \textbf{2.456 (0.042)} \\
\midrule
parkinsons & Coverage & 0.937 (0.003) & 0.946 (0.0) & 0.958 (0.009) & 0.953 (0.007) & 0.936 (0.001) & 0.904 (0.008) & 0.936 (0.005) & 0.955 (0.004) & \textit{0.873 (0.043)} & 0.951 (0.003)           \\
& Length & 35.957 (0.323) & 36.430 (0.328) & 3.254 (0.423) & 11.268 (0.114) & 10.992 (0.074) & --- & 21.470 (0.003) & 14.161 (0.083) & \textit{3.457 (0.727)} & \textbf{4.584 (0.637)} \\
\midrule
pol & Coverage & 0.944 (0.001) & 0.942 (0.002) & 0.951 (0.004) & 0.955 (0.001) & 0.938 (0.001) & 0.909 (0.003) & 0.946 (0.001) & 0.953 (0.005) & \textit{0.884 (0.009)} & 0.952 (0.003)           \\
& Length & 97.944 (0.150) & 97.771 (0.386) & 28.000 (0.000) & 48.572 (1.279) & 45.056 (0.821) & --- & 69.078 (0.362) & 57.432 (0.663) & \textit{24.321 (1.370)} & \textbf{33.230 (1.569)} \\
\midrule
protein & Coverage & 0.958 (0.001) & 0.957 (0.002) & 0.953 (0.003) & 0.959 (0.003) & 0.957 (0.003) & 0.928 (0.003) & 0.953 (0.003) & 0.955 (0.0) & \textit{0.91 (0.002)} & 0.963 (0.004)         \\
& Length & 2.316 (0.000) & 2.412 (0.011) & 2.151 (0.014) & 2.210 (0.006) & 2.134 (0.002) & --- & 2.269 (0.001) & 3.707 (0.039) & \textit{1.717 (0.036)} & \textbf{1.994 (0.000)} \\
\midrule
pumadyn32nm & Coverage & 0.961 (0.011) & 0.961 (0.01) & 0.947 (0.001) & 0.962 (0.003) & 0.96 (0.007) & 0.935 (0.003) & 0.95 (0.013) & 0.957 (0.013) & \textit{0.906 (0.026)} & 0.963 (0.001)           \\
& Length & 3.997 (0.024) & 3.979 (0.031) & 1.525 (0.027) & 3.518 (0.058) & 3.507 (0.051) & 2.350 (0.043) & 3.242 (0.033) & 5.351 (0.139) & \textit{1.564 (0.048)} & \textbf{1.858 (0.078)} \\
\midrule
tamielectric & Coverage & 0.953 (0.002) & 0.953 (0.002) & 0.947 (0.003) & 0.953 (0.003) & 0.952 (0.003) & 0.926 (0.007) & 0.949 (0.0) & 0.948 (0.003) & \textit{0.909 (0.003)} & 0.949 (0.002)           \\
& Length & 0.950 (0.001) & 0.951 (0.001) & 1.271 (0.006) & 0.953 (0.001) & 0.948 (0.001) & --- & 1.029 (0.003) & 4.539 (0.038) & \textit{0.774 (0.005)} & \textbf{0.799 (0.004)} \\
\midrule
wine & Coverage & 0.96 (0.005) & 0.943 (0.01) & 0.931 (0.015) & 0.923 (0.02) & 0.928 (0.005) & 0.884 (0.015) & 0.948 (0.005) & 0.946 (0.015) & \textit{0.805 (0.054)} & 0.933 (0.015)           \\
& Length & 2.352 (0.002) & 3.521 (0.025) & 2.619 (0.050) & --- & --- & --- & 2.621 (0.039) & 3.390 (0.042) & \textit{1.683 (0.224)} & \textbf{2.291 (0.026)} \\
\bottomrule
\end{tabular}
}
\end{table}






\section{Robust Traffic Routing Setup}\label{appendix:experiments_routing}
We replicate the experimental setup of \cite{patel2023conformal}, namely where a graph of Manhattan with corresponding nominal transit times was extracted using OSMnx \cite{boeing2024modeling}. Formally, the Manhattan graph is given as a tuple $(\mathcal{V},\mathcal{E})$, where the edge weights represent the transit times along the respective city roads. Such weights were assigned in a two-step process, namely by first making weather predictions and then using such weather predictions to then upweight the nominal transit times. In particular, precipitation forecasts were made from time-series observations of previous precipitations readings, specifically given over a map spatially resolved to $H\times W$ resolution. Precipitation forecasters, such as those considered in the experiments herein as given in \cite{pulkkinen2019pysteps} and \cite{gao2024prediff}, specifically map such previous observations to potential future trajectories. Formally, they define probabilistic models over some future time horizon $T_{f}$, from which probabilistic draws $\widetilde{Y}\in\mathbb{R}^{T_{f}\times H\times W}\sim \mathcal{P}(\widetilde{Y}\mid x)$ can be made, where $x\in\mathbb{R}^{T\times H\times W}$. Notably, we instead consider the probabilistic forecasts at some future \textit{fixed} time point $T'$, meaning the outcome of interest $Y\in\mathbb{R}^{H\times W} = \widetilde{Y}_{T'}$

From a precipitation map, namely a spatially resolved reading $Y\in\mathbb{R}^{H\times W}$, we assign the final edge weights by first associating nodes to the closest pixel coordinate of the precipitation map. That is, denoting the pixel nearest to a vertex $v$ as $(p^{v}_{x},p^{v}_{y})$, the node is assigned the value at such a spatial location $Y_{p^{v}_{y},p^{v}_{x}}$. To, therefore, assign the edge weight, we average the weights of the edge endpoints and then weigh the nominal transit time. In particular, denoting the nominal transit time along such an edge $e$ between nodes $(s,t)$ as $\widetilde{c}_{e}$, the transit time with traffic was computed as
\begin{equation}\label{eqn:forecast_to_weight}
    c_{e} := \widetilde{c}_{e} \cdot \exp\left\{\frac{Y_{p^{e_s}_x, p^{v}_y} + Y_{p^{e_t}_x, p^{e_t}_y}}{2}\right\}.
\end{equation}


\section{Compute Details}\label{section:compute_resources}
All OpenML were all run on a standard-grade CPU. The deep learning-based experiments, namely the ImageNet classification and traffic forecasting predict-then-optimize task, were performed on an Nvidia RTX 2080 Ti GPU. Such experiments, however, were conducted with publicly available, pre-trained models provided by the works respectively referenced in the sections describing the experimental setups.

\section{Conformal Aggregation Methods}\label{section:comparisons_methods}
We now describe the methods from \cite{gasparin2024conformal} that were compared against experimentally, specifically the standard majority-vote $\mathcal{C}^{M}$, partially randomized thresholding $\mathcal{C}^{R}$, and fully randomized thresholding $\mathcal{C}^{U}$ approaches. As discussed in \Cref{section:related_works}, these methods all follow the structural form of 
\begin{equation}\label{eqn:gen_conf_agg}
    \mathcal{C}(x) := \left\{ y\mid \sum_{k=1}^{K} w_{k} \mathbbm{1}[y\in\mathcal{C}_k(x)] \ge\widehat{a} \right\}
\end{equation}
and largely differ in their choice of weights and thresholds. The standard majority-vote $\mathcal{C}^{M}$ is the most natural choice, defined by
\begin{equation}
    \mathcal{C}^{M}(x) := \left\{ y\mid \frac{1}{K}\sum_{k=1}^{K} \mathbbm{1}[y\in\mathcal{C}_k(x)] >\frac{1}{2} \right\}.
\end{equation}
The randomized methods differ in that independent randomization is leveraged over the threshold, namely with:
\begin{gather}
    \mathcal{C}^{R}(x) := \left\{ y\mid \frac{1}{K}\sum_{k=1}^{K} \mathbbm{1}[y\in\mathcal{C}_k(x)] > \frac{1}{2} + \frac{U}{2} \right\}\\    
    \mathcal{C}^{U}(x) := \left\{ y\mid \frac{1}{K}\sum_{k=1}^{K} \mathbbm{1}[y\in\mathcal{C}_k(x)] > U \right\},
\end{gather}
for $U\sim\mathrm{Unif}([0,1])$. Notably, all these methods retain the guarantees typical of conformal prediction.

\end{document}